\newcommand{\R}{\mathbb{R}}
\newcommand{\svd}{\operatorname{SVD}}
\newcommand{\rank}{\operatorname{rank}}
\newcommand{\nnz}{\operatorname{nnz}}
\newcommand{\poly}{\operatorname{poly}}
\newcommand{\FD}{\operatorname{FD}}
\newcommand{\eps}{\varepsilon}
\newcommand{\rows}{\operatorname{rows}}
\newcommand{\fd}{\textsc{FrequentDirections}\xspace}
\newcommand{\sfd}{\textsc{SparseFrequentDirections}\xspace}
\newcommand{\SFD}{\textsc{SFD}\xspace}
\newcommand{\etal}{\emph{et al.}\xspace}
\newcommand{\si}{\textsc{SimultaneousIteration}\xspace}
\newcommand{\SSh}{\textsc{SparseShrink}\xspace}
\newcommand{\BSSh}{\textsc{BoostedSparseShrink}\xspace}
\newcommand{\DSh}{\textsc{DenseShrink}\xspace}
\newcommand{\vs}{\textsc{VerifySpectral}\xspace}
\newtheorem{theorem}{Theorem}[section]
\newtheorem{lemma}{Lemma}[section]
\newtheorem{fact}{Fact}[section]
\newcommand{\Paragraph}[1]{\paragraph*{\sffamily \textbf{#1}}}
\newcommand{\denselist}{\itemsep -2pt\parsep=-1pt\partopsep -2pt}
\newlength{\figsize} \setlength{\figsize}{0.22\textwidth}
\title{Efficient Frequent Directions Algorithm for Sparse Matrices}
\author{
Mina Ghashami\\ University of Utah \\ \texttt{ghashami@cs.utah.edu}
\and
Edo Liberty\\ Yahoo Labs\\ \texttt{edo.liberty@yahoo.com}
\and
Jeff M. Phillips\thanks{Thanks to support by NSF CCF-1350888, IIS-1251019, ACI-1443046, and CNS-1514520.}\\ University of Utah \\ \texttt{jeffp@cs.utah.edu}
}
\date\nonumber
\begin{document}

\maketitle
\begin{abstract}
This paper describes Sparse Frequent Directions, a variant of Frequent Directions for sketching sparse matrices. 
It resembles the original algorithm in many ways:
both receive the rows of an input matrix $A^{n \times d}$ one by one in the streaming setting and compute a small sketch $B \in \R^{\ell \times d}$.
Both share the same strong (provably optimal) asymptotic guarantees with respect to the space-accuracy tradeoff in the streaming setting. However, unlike Frequent Directions which runs in $O(nd\ell)$ time regardless of the sparsity of the input matrix $A$, Sparse Frequent Directions runs in $\tilde{O}\left(\nnz(A)\ell + n\ell^2\right)$ time.
Our analysis loosens the dependence on computing the Singular Value Decomposition (SVD) as a black box within the Frequent Directions algorithm. Our bounds require recent results on the properties of fast approximate SVD computations.  
Finally, we empirically demonstrate that these asymptotic improvements are practical and significant on real and synthetic data.  
\end{abstract}

\section{Introduction}
\label{sec:intro}
It is very common to represent data in the form of a matrix. 
For example, in text analysis under the bag-of-words model, a large corpus of documents can be represented as a matrix whose rows refer to the documents and columns correspond to words. A non-zero in the matrix corresponds to a word appearing in the a document. 
Similarly, in recommendation systems~\cite{drineas2002competitive}, preferences of users are represented as a matrix with rows corresponding to users and columns corresponding to items. Non-zero entires correspond to user ratings or actions.

A large set of data analytic tasks rely on obtaining a low-rank approximation of the data matrix. 
These include clustering, dimension reduction, principal component analysis (PCA), signal denoising, etc. 
Such approximations can be computed using the Singular Value Decompositions (SVD).  
For an $n \times d$ matrix $A$ ($d\le n$) computing the SVD requires $O(n d^2)$ time and $O(nd)$ space in memory on a single machine. 
In many scenarios, however, data matrices are extremely large and computing their SVD exactly is infeasible. 
Efficient approximate solutions exist for distributed setting or when data access otherwise is limited.
In the row streaming model, the matrix rows are presented to the algorithm one by one in an arbitrary order.
The algorithm is tasked with processing the stream in one pass while being severely restricted in its memory footprint. 
At the end of the stream, the algorithm must provide a sketch matrix $B$ which is a good approximation of $A$ even though it is significantly more compact. This is called matrix sketching.

Matrix sketching methods are designed to be parallelizable, space and time efficient, and easily updatable.  
Computing the sketch on each machine and then combining the sketches together should be as good as sketching the combined data from all the different machines.
%The space and time efficiency is often achieved within the streaming model where a single pass (or sometimes a constant number of passes) is made over the data while maintaining the sketch.  
The streaming model is especially attractive since a sketch can be obtained and maintained as the data is being collected.
Therefore, eliminating the need for data storage altogether.

Often matrices, as above, are sparse;  most of their entries are zero. 
The work of \cite{dhillon2001concept} argues that typical term-document matrices are sparse; documents contain no more than $5\%$ of all words.  
On wikipedia, most words appear on only a small constant number of pages.    
Similarly, in recommendation systems, in average a user rates or interacts with a small fraction of the available items:  
less than $6\%$ in some user-movies recommendation tasks~\cite{asendorfalgorithms} and much fewer in physical purchases or online advertising.
As such, most of these datasets are stored as sparse matrices.  
%Therefore, even data sets that fit on disk or in memory (when stored in sparse formats) may lead to exorbitant running times using algorithms whose running time is $\Omega(n\cdot d)$.  

There exist several techniques for producing low rank approximations of sparse matrices whose running time is 
$O(\nnz(A) \poly(k,1/\eps))$ for some error parameter $\eps \in (0,1)$. 
Here $\nnz(A)$ denotes the number of non-zeros in the matrix $A$.
Examples include the power method~\cite{golub2012matrix}, random projection techniques~\cite{sarlos2006improved}, projection-hashing~\cite{clarkson2013low}, and instances of column selection techniques~\cite{drineas2006fast2}. 

However, for a recent and popular technique \fd (best paper of KDD 2013~\cite{liberty2013simple}), 
there is no known way to take advantage of the sparsity of the input matrix.  
While it is deterministic and its space-error bounds are known to be optimal for dense matrices in the row-update model~\cite{ghashami2015frequent}, it runs in $O(nd \ell)$ time to produce a sketch of size $\ell \times d$.  
In particular, it maintains a sketch with $\ell$ rows and updates it iteratively over a stream, periodically invoking a full SVD which requires $O(d \ell^2)$ time.  
Reliance on exact SVD computations seems to be the main hurdle in reducing the runtime to depend on $O(\nnz(A))$.  
This paper shows a version of \fd whose runtime depends on $O(\nnz(A))$. This requires a new understanding and a more careful analysis of \fd.  
It also takes advantage of block power methods (also known as Subspace Iteration, Simultaneous Iteration, or Orthogonal Iteration) that run in time proportional to $\nnz(A)$ but incur small approximation error~\cite{musco2015stronger}.  

%%%%%%%%%%%%%%%%%%%%%%%%%%%%%%%%%%%%%%%%%%%%%%%%%%%%%%%%%%%%%%%%%%%%
\subsection{Linear Algebra Notations}

%The number of non-zero entries of an $n \times d$ matrix $A$ is denoted as $\nnz(A)$.
%The ratio $\nnz(A)/nd$ is called the (relative) density of $A$.
Throughout the paper we identify an $n \times d$ matrix $A$ with a set of $n$ rows $[a_1; a_2; \ldots; a_n]$ where each $a_i$ is a vector in $\R^d$.
The notation $a_i$ stands for the $i$th row of the matrix $A$.  By $[A ; a]$ we mean the row vector $a$ appended to the matrix $A$ as its last row. 
Similarly, $[A;B]$ stands for stacking two matrices $A$ and $B$ vertically.
The matrices $I_n$ and $0^{n \times d}$ denote the $n$-dimensional identity matrix and the full zero matrix of dimension $n \times d$ respectively. 
The notation $\mathcal{N}(0,1)^{d\times \ell}$ denotes the distribution over $d\times \ell$ matrices whose entries are drawn independently from the normal distribution $\mathcal{N}(0,1)$.
For a vector $x$ the notation $\|\cdot\|$ refers to the Euclidian norm $\|x\| = (\sum_i{x_i^2})^{1/2}$.
The Frobenius norm of a matrix $A$ is defined as $\|A\|_F = \sqrt{\sum_{i=1} \|a_i\|^2}$, and the operator (or spectral) norm of it is 
$\|A\|_2 = \sup_{x\ne 0}\|Ax\|/\|x\|$.

The notation $\nnz(A)$ refers to the number of non-zeros in $A$, and $\rho = \nnz(A) / (nd)$ denotes relative density of $A$.
The Singular Value Decomposition of a matrix $A \in \mathbb{R}^{m \times d}$ for $m \le d$ is denoted by $[U,\Sigma ,V] = \svd(A)$.
%{\edo{WHY ARE WE USING THE FULL SVD?   %We do use it also, and is useful to start with this I think.  
It guarantees that $A = U\Sigma V^T$, $U^TU = I_m$, $V^TV = I_m$, $U\in \R^{m \times m}$, $V\in \R^{d \times m}$,
and $\Sigma \in \R^{m \times m}$ is a non-negative diagonal matrix such that $\Sigma_{i,i} = \sigma_i$ and $\sigma_{1}\ge \sigma_{2} \ge \ldots \ge \sigma_{m} \ge 0$. 
It is convenient to denote by $U_k$, and $V_k$ the matrices containing the first $k$ columns of $U$ and $V$ and by $\Sigma_k \in \R^{k \times k}$ the top left $k \times k$ block of $\Sigma$.
The matrix $A_k = U_k \Sigma_k V_k^T$ is the best rank $k$ approximation of $A$ in the sense that $A_k = {\arg \min}_{C : \rank(C) \leq k} \|A - C\|_{2,F}$. In places where we use $\svd(A,k)$, we mean rank $k$ SVD of $A$.

The notation $\pi_B(A)$ denotes the projection of the rows of $A$ on the span of the rows of $B$. In other words, $\pi_B(A) = A B^\dagger B$ where $(\cdot)^\dagger$ indicates taking the Moore-Penrose psuedoinverse.
Alternatively, setting $[U,\Sigma ,V] = \svd(B)$, we have $\pi_B(A) = AVV^T$. We also denote $\pi_B^k(A) = AV_k V_k^T$, the right projection of $A$ on the top $k$ right singular vectors of $B$. %For a matrix $A$ and a vector $x$, $\spn(A)$ refers to the column space of $A$, and we say $x\in \spn(A)$ if $x$ can be written as a linear combination of \edo{columns (rows?)} of $A$.

%%%%%%%%%%%%%%%%%%%%%%%%%%%%%%%%%
%\section{Related Work}
%\label{sec:related}
%In this section, we review related work and explain the significance of our results.  

%%%%%%%%%%%%%%%%%%%%%%%%%%%%%%%%%%%%%%%%%%%%%%%%%%%%%%%%%%%%%%%%%%%%
\section{Matrix Sketching Prior Art}
This section reviews only matrix sketching techniques that run in input sparsity time and whose output sketch is independent of the number of rows in the matrix.  We categorize all known results into three main approaches (1) column/row subset selection (2) random projection based techniques and (3) iterative sketching techniques. 

%%%%%%%%%%%%%%%%%%%%%%%%%%%%%%%%%%

\Paragraph{Column selection techniques} 
These techniques, which are also studied under the Column Subset Selection Problem (CSSP) in literature~\cite{frieze2004fast,drineas2003pass,boutsidis2009improved,deshpande2006adaptive,drineas2011faster,boutsidis2011near}, form the sketch $B$ by selecting a subset of ``important'' columns of the input matrix $A$. They maintain the sparsity of $A$ and make the sketch $B$ to be more interpretable. %~\cite{frieze2004fast,drineas2003pass,boutsidis2009improved,deshpande2006adaptive,drineas2011faster,boutsidis2011near}.
These methods are not typically streaming, nor running in input sparsity time. The only method of this group which achieves both is~\cite{drineas2006fast2} by Drineas \etal that uses reservoir sampling to become streaming. 
They select $O(k/\eps^2)$ columns proportional to their squared norm and achieve the Frobenius norm error bound $\|A-\pi_{B_k}(A)\|_F^2 \leq \|A-A_k\|_F^2 + \eps \|A\|_F^2$ with time complexity of $O((k^2/\eps^4) (d+k/\eps^2) + \nnz(A))$.  
In addition, they show that the spectral norm error bound $\|A-\pi_{B_k}(A)\|_2^2 \leq \|A-A_k\|_2^2 + \eps \|A\|_F^2$ holds if one selects $O(1/\eps^2)$ columns. %; in that case their algorithm runs in $O(\nnz(A) + (d+1/\eps^2)/\eps^4)$ time.
Rudelson \etal~\cite{rudelson2007sampling} improved the latter error bound to $\|A-\pi_{B_k}(A)\|_2^2 \leq \|A-A_k\|_2^2 + \eps \|A\|_2^2$ by selecting $O(r/\eps^4 \log{(r/\eps^4)})$ columns, where $r = \|A\|_F^2 / \|A\|_2^2$ is the numeric rank of $A$. Note that in the result by~\cite{drineas2006fast2}, one would need $O(r^2/\eps^2)$ columns to obtain the same bound.

Another similar line of work is the CUR factorization~\cite{boutsidis2014optimal,drineas2003pass,drineas2006fast,drineas2008relative,mahoney2009cur} where methods select $c$ columns and $r$ rows of $A$ to form matrices $C\in\R^{n\times c}$, $R\in\R^{r\times d}$ and $U\in\R^{c\times r}$, and constructs the sketch as $B = CUR$. The only instance of this group that runs in input sparsity time is~\cite{boutsidis2014optimal} by Boutsidis and Woodruff, where they select $r = c = O(k/\eps)$ rows and columns of $A$ and construct matrices $C, U$ and $R$ with $\rank(U) = k$ such that with constant probability $\|A-CUR\|_F^2 \leq (1+\eps)\|A-A_k\|_F^2$. Their algorithm runs in $O(\nnz(A) \log n + (n+d) \poly(\log n, k, 1/\eps))$ time.

%The only instance of CUR factorization methods that runs in input sparsity time is~\cite{boutsidis2014optimal} by Boutsidis and Woodruff. They presented a randomized algorithm that selects $c = O(k/\eps)$ columns, and $r = O(k/\eps)$ rows of $A$ to form matrices $C\in\R^{n\times c}$ and $R\in\R^{r\times d}$ and $U\in\R^{c\times r} $ with $\rank(U) = k$ such that with constant probability $\|A-CUR\|_F^2 \leq (1+O(\eps))\|A-A_k\|_F^2$. Their algorithm runs in $O(\nnz(A) \log n + (n+d) \poly(\log n, k, 1/\eps))$ time.

%%%%%%%%%%%%%%%%%%%%%%%%%%%%%%%%%%
\Paragraph{Random projection techniques} 
These techniques~\cite{papadimitriou1998latent,vempala2004random,sarlos2006improved,liberty2007randomized} operate data-obliviously and maintain a $r \times d$ matrix $B = S A$ using a $r \times n$ random matrix $S$ which has the Johnson-Lindenstrauss Transform (JLT) property~\cite{matouvsek2008variants}.  
Random projection methods work in the streaming model, are computationally efficient, and sufficiently accurate in practice~\cite{desai2015improved}.
%The only technique of this group that runs in input sparsity time is~\cite{sarlos2006improved} by Sarl\"os who showed if $S$ is a $r\times n$ matrix with iid zero-mean $\pm1$ entries, and $r = O(k/\eps + k\log k)$ then $\|A-\pi_{B}(A)\|_F \leq (1+\eps) \|A-A_k\|_F$ holds with constant probability. His algorithm runs in $O(\nnz(A)(k/\eps+k \log k) + (n+d)(k/\eps+k \log k)^2)$ time and using $O((n+d)(k/\eps+k \log k)^2)$ space.
%
The state-of-the-art method of this approach is by Clarkson and Woodruff~\cite{clarkson2013low} which was later improved slightly in~\cite{NN13}. 
It uses a hashing matrix $S$ with only one non-zero entry in each column.  Constructing this sketch takes only  $O(\nnz(A) + n \cdot \poly(k/\eps) + \poly(dk/\eps))$ time, and guarantees that for any unit vector $x$ that 
$
(1-\eps)\|Ax\| \leq \|B x\| \leq (1+\eps) \|Ax\|.
$
For these sparsity-efficient sketches using $r = O(d^2/\eps^2)$ also guarantees that 
$\|A - \pi_B(A)\|_F \leq (1+\eps) \|A - A_k\|_F$.    
% + \tilde O(nk^2/\eps^4 + k^3/\eps^5)$ time low rank approximation technique that uses $O((k^2/\eps^6) \log^4(k/\eps) + (nk/\eps^3) \log^2(k/\eps) + (nk/\eps) \log(k/\eps))$ space to construct two orthonormal matrices $L ,W \in\R^{n\times k}$ and a diagonal $D\in\R^{k\times k}$ with $\|A - LDW^T\|_F \leq (1+\eps) \|A-A_k\|_F$. Their method is often referred to as hashing and is not streaming; it needs multiple passes over matrix $A$. 

%%%%%%%%%%%%%%%%%%%%%%%%%%%%%%%%%%
\Paragraph{Iterative sketching techniques} 
These operate in streaming model, where random access to the matrix is not available. 
They maintain the sketch $B$ as a linear combination of rows of $A$, and update it as new rows are received in the stream.  
Examples of these methods include different version of iterative SVD~\cite{golub2012matrix, hall1998incremental, levey2000sequential, brand2002incremental,ross2008incremental}. These, however, do not have theoretical guarantees~\cite{desai2015improved}.   
The \fd algorithm~\cite{liberty2013simple} is a unique in this group in that it offers strong error guarantees. 
It is a deterministic sketching technique that processes rows of an $n \times d$ matrix $A$ in a stream and maintains a $\ell \times d$ sketch $B$ (for $\ell < \min(n,d)$) such that the following two error bounds hold for any $0 \leq k < \ell$
\[
\|A^TA - B^TB\|_2 \leq \frac{1}{\ell-k}\|A-A_k\|_F^2
\]
and
\[
 \|A-\pi_B(A)\|_F^2 \leq \frac{\ell}{\ell-k} \|A-A_k\|_F^2 \ .
\]
Setting $\ell = k + 1/\eps$ and $\ell = k + k/\eps$, respectively, achieves bounds 
$
\|A^TA - B^TB\|_2 \leq \eps \|A-A_k\|_F^2
$
and
$
\|A-\pi_B(A)\|_F^2 \leq (1+\eps) \|A-A_k\|_F^2.
$
Although \fd does not run in input sparsity time, we will explain it in detail in the next section, as it is an important building block for the algorithm we introduce.  

%%%%%%%%%%%%%%%%%%%%%%%%%%%%%%%%%
\subsection{Main Results} 
We present a randomized version of \fd, called as \sfd that receives an $n \times d$ sparse matrix $A$ as a stream of its rows. 
It computes a $\ell \times d$ sketch $B$ in $O(d\ell)$ space. 

It guarantees that with probability at least $1-\delta$ (for $\delta \in (0,1)$ being the failure probability), for $\alpha = 6/41$ and any $0 \leq k < \alpha \ell$,
\[
\|A^T A - B^T B\|_2 \leq \frac{1}{\alpha \ell - k} \|A -A_k\|_F^2
\]
and
\[
\|A - \pi_{B_k}(A)\|_F^2 \leq \frac{\ell}{\ell - k/\alpha} \|A - A_k\|_F^2.
\] 
Note that setting $\ell = \lceil 1/(\eps\alpha) + k/\alpha\rceil$ yields 
\[
\|A^T A - B^T B\|_2 \leq \eps \|A -A_k\|_F^2
\]
and setting $\ell = \lceil k/(\eps\alpha) + k/\alpha\rceil$ yields 
\[
\|A - \pi_{B_k}(A)\|_F^2 \leq (1+\eps) \|A - A_k\|_F^2.
\]

The expected running time of the algorithm is
\[
O( \nnz(A)\ell \log(d) +  \nnz(A)\log (n/\delta)+ n\ell^2 + n\ell\log(n/\delta) ).
\] 
%where $\delta \in (0,1)$ is the probability of failure.  
In the likely case where $\nnz(A) = \Omega (n \ell)$ and $n/\delta  < d^{O(\ell)}$, the runtime is dominated by $O(\nnz(A) \ell \log(d))$.  
We also experimentally validate this theory, demonstrating these runtime improvements on sparse data without sacrificing accuracy.

%%%%%%%%%%%%%%%%%%%%%%%%%%%%%%%%%%%%%%%%%%%%%%%%%%%%%%%%%%%%%%%%%%%%
\section{Preliminaries}

In this section we review some important properties about \fd and \si which will be necessary for understanding and proving bounds on  \sfd.  

\subsection{Frequent Directions}
\label{related:fd}

The \fd algorithm was introduced by Liberty~\cite{liberty2013simple} and received an improved analysis by Ghashami et al.\cite{ghashami2015frequent}. 
The algorithm operates by collecting several rows of the input matrix and letting the sketch grow.
Once the sketch doubles in size, a lossy \textsc{DenseShrink} operation reduces its size by a half.
This process repeats throughout the stream.
The running time of \fd and its error analysis are strongly coupled with the properties of the 
SVD used to perform the \textsc{DenseShrink} step.

An analysis of~\cite{desai2015improved} slightly generalized the one in~\cite{ghashami2015frequent}.
Let $B$ be the sketch resulting in applying \fd with a potentially different shrink operation to $A$. 
Then, the \fd asymptotic guarantees hold as long as the shrink operation exhibits three properties, for any positive $\Delta$ and a constant $\alpha \in (0,1)$.
\begin{enumerate} %\denselist
\item Property 1:  For any vector $x\in\R^d$, $\|Ax\|^2 - \|Bx\|^2 \geq 0$.
\item Property 2: For any unit vector $x\in\R^d$, $\|Ax\|^2 - \|Bx\|^2 \leq \Delta$.
\item Property 3: $\|A\|_F^2 - \|B\|_F^2 \geq \alpha \Delta \ell$.
\end{enumerate}
%for some $\Delta > 0$ and a constant $\alpha \in (0,1)$.
For completeness, the exact guarantee is stated in Lemma~\ref{lem:desai}.

\begin{lemma}[Lemma 3.1 in~\cite{desai2015improved}]\label{lem:desai}
%Any $B$ satisfying the above three properties satisfies 
Given an input $n \times d$ matrix $A$ and an integer parameter $\ell$, any sketch $\ell \times d$ matrix $B$ which satisfies the three properties above (for some any $\alpha \in (0,1]$ and $\Delta > 0$),
guarantees the following error bounds
\[
0 \leq \|A^T A - B^T B\|_2 \leq \frac{1}{\alpha \ell-k} \|A -A_k\|_F^2, 
\]
and
\[
\|A - \pi_{B}^k(A)\|_F^2 \leq \frac{\ell}{ \ell -k/\alpha} \|A - A_k\|_F^2, 
\]
where $\pi^k_{B}(\cdot)$ represents the projection operator onto $B_k$, the top $k$ singular vectors of $B$.  
\end{lemma}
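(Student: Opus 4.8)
The plan is to derive both inequalities directly from the three structural properties, using only the Courant--Fischer min-max characterization of singular values and the fact that the Frobenius norm is a trace that may be evaluated in any orthonormal basis; write $\Delta$ for the abstract quantity in Properties~2 and~3. For the spectral bound, first note that $A^TA-B^TB$ is symmetric, so $\|A^TA-B^TB\|_2=\max_{\|x\|=1}\bigl|\,\|Ax\|^2-\|Bx\|^2\,\bigr|$; Property~1 makes the expression inside the absolute value nonnegative and Property~2 caps it by $\Delta$, hence $0\le\|A^TA-B^TB\|_2\le\Delta$. It then suffices to bound $\Delta$ by $\tfrac{1}{\alpha\ell-k}\|A-A_k\|_F^2$. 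For that I would expand $\|A\|_F^2-\|B\|_F^2=\sum_{i=1}^d(\|Av_i\|^2-\|Bv_i\|^2)$ in the orthonormal basis $v_1,\dots,v_d$ of \emph{right singular vectors of $A$} (completed to a full basis if $\rank(A)<d$). Splitting at index $k$: each of the first $k$ terms is at most $\Delta$ by Property~2, contributing at most $k\Delta$; each remaining term satisfies $\|Av_i\|^2-\|Bv_i\|^2\le\|Av_i\|^2=\sigma_i^2(A)$, so the tail contributes at most $\sum_{i>k}\sigma_i^2(A)=\|A-A_k\|_F^2$. Thus $\|A\|_F^2-\|B\|_F^2\le k\Delta+\|A-A_k\|_F^2$, and combining with Property~3 gives $\alpha\Delta\ell\le k\Delta+\|A-A_k\|_F^2$; rearranging (legitimate since $k<\alpha\ell$) yields $\Delta\le\tfrac{1}{\alpha\ell-k}\|A-A_k\|_F^2$.

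For the projection bound, let $v_1,\dots,v_\ell$ now denote the right singular vectors of $B$ and $V_k$ their first $k$. Since $V_kV_k^T$ is an orthogonal projection, the Pythagorean identity gives $\|A-\pi_B^k(A)\|_F^2=\|A\|_F^2-\|AV_k\|_F^2$. I would lower bound $\|AV_k\|_F^2=\sum_{i=1}^k\|Av_i\|^2\ge\sum_{i=1}^k\|Bv_i\|^2=\sum_{i=1}^k\sigma_i^2(B)$ using Property~1, and then bound $\sigma_i^2(B)\ge\sigma_i^2(A)-\Delta$ via Courant--Fischer: for every $i$-dimensional subspace $S$ and every unit $x\in S$ Property~2 gives $\|Bx\|^2\ge\|Ax\|^2-\Delta$, so taking the minimum over $x\in S$ and then the maximum over $S$ yields the claim. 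Hence $\|AV_k\|_F^2\ge\|A_k\|_F^2-k\Delta$, so $\|A-\pi_B^k(A)\|_F^2\le\|A-A_k\|_F^2+k\Delta$. Substituting the bound on $\Delta$ and simplifying, $\|A-A_k\|_F^2\bigl(1+\tfrac{k}{\alpha\ell-k}\bigr)=\tfrac{\alpha\ell}{\alpha\ell-k}\|A-A_k\|_F^2=\tfrac{\ell}{\ell-k/\alpha}\|A-A_k\|_F^2$, as claimed.

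The algebraic rearrangements are routine; the one place that needs care is the choice of orthonormal basis in the two trace decompositions. In the spectral bound one must expand in the right singular vectors of $A$, so that discarding $-\|Bv_i\|^2\le0$ in the tail produces an \emph{upper} bound of the form $\|A-A_k\|_F^2$ and not a lower bound; symmetrically, in the projection bound one must expand in the right singular vectors of $B$ so that Property~1 applies term by term. The only genuinely non-mechanical ingredient is the min-max estimate $\sigma_i^2(B)\ge\sigma_i^2(A)-\Delta$, which is a standard consequence of Property~2, and keeping track of the hypothesis $k<\alpha\ell$ that keeps the denominators positive.
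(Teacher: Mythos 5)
Your proof is correct, and it is essentially the standard Frequent Directions analysis that the cited reference (Desai et al., and before it Ghashami--Liberty--Phillips) uses: bound $\|A^TA-B^TB\|_2\le\Delta$ via Properties 1--2, control $\Delta$ by trading Property 3 against the trace expansion in the right singular basis of $A$, and get the projection bound from $\sigma_i^2(B)\ge\sigma_i^2(A)-\Delta$ via Courant--Fischer. The paper itself only cites this lemma without reproducing the argument, so there is nothing further to compare; your handling of the basis choices and of the hypothesis $k<\alpha\ell$ is exactly right.
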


Another important property of \fd is that its sketches are mergeable~\cite{ghashami2015frequent}. To clarify, consider partitioning a matrix $A$ into $t$ blocks $A_1, A_2, \cdots, A_t$ so that $A = [A_1; A_2; \cdots; A_t]$. Let $B_i = \FD(A_i, \ell)$ denotes the $\FD$ sketch of the matrix block $A_i$, and $B' = \FD ([B_1; B_2; \cdots; B_t], \ell)$ denotes the $\FD$ sketch of all $B_i$s combined together.
It is shown that $B'$ has at most as much covariance and projection error as $B = \FD(A, \ell)$, i.e. the sketch of the whole matrix $A$. It follows that this divide-sketch-and-merging can also be applied recursively on each matrix block without increasing the error.

The runtime of \fd is determined by the number of shrinking steps. 
Each of those computes an $\svd$ of $B$ which takes $O(d\ell^2)$ time. 
Since the SVD is called only every $O(\ell)$ rows this yields a total runtime $O(d\ell^2 \cdot n/\ell) = O(nd\ell)$. This effectively means that on average we are spending $O(d\ell)$ operations per row, even if the row is sparse. 

In the present paper, we introduce a new method called as \sfd that uses randomized SVD methods instead of the exact SVD to approximate the singular vectors and values of intermediate matrices $B$.  We show how this new method tolerates the extra approximation error and runs in time proportional to $\nnz(A)$. Moreover, since it received sparse matrix rows, it can observe more the $\ell$ rows until the size of the sketch doubles.
As a remark, Ghashami and Phillips~\cite{ghashami2014relative} showed that maintaining any rescaled set of $\ell$ rows of $A$ over a stream is not a feasible approach to obtain sparsity in \fd. It was left as an open problem to produce some version of \fd that took advantage of the sparsity of $A$.  

\subsection{Simultaneous  Iteration}
Efficiently computing the singular vectors of matrices is one of the most well studies problems in scientific computing.
Recent results give very strong approximation guarantees for block power method techniques \cite{rokhlin2009randomized}\cite{woolfe2008fast}\cite{liberty2007randomized}\cite{halko2011finding}. Several variants of this algorithm were studied under different names in the literature e.g. Simultaneous  Iteration, Subspace Iteration, or Orthogonal Iteration~\cite{golub2012matrix}. 
In this paper, we refer to this group of algorithms collectively as \si. 
A generic version of \si for rectangular matrices is described in Algorithm~\ref{alg:sim_itr}.
\begin{algorithm}[H]
\caption{\si}
\label{alg:sim_itr}
\begin{algorithmic}
\STATE \textbf{Input}: $A \in \R^{n \times d}$, rank $k \leq \min(n,d)$, and error $\eps\in(0,1)$
\STATE $q = \Theta(\log (n/\eps)/\eps)$
\STATE $G \sim \mathcal{N}(0,1)^{d\times k}$
\STATE $Z = \operatorname{GramSchmidt}(A(A^TA)^qG)$
\STATE \textbf{return} $Z$ \hfill \# $Z\in\R^{n\times k}$
\end{algorithmic}
\end{algorithm}

While this algorithm was already analyzed by \cite{golub2012matrix}, the proofs of~\cite{rokhlin2009randomized, halko2011finding, musco2015stronger, witten2013randomized} manage to prove stable results that hold for any matrix independent of spectral gap issues.
Unfortunately, an in depth discussion of these algorithms and their proof techniques is beyond the scope of this paper. 

For the proof of correctness of \sfd, the main lemma proven by \cite{musco2015stronger} suffices.
\si (Algorithm~\ref{alg:sim_itr}) guarantees the three following error bounds with high probability:
\begin{enumerate}
\item Frobenius norm error bound: 
$\|A-ZZ^TA\|_F \leq (1+\eps) \|A-A_k\|_F$
\item Spectral norm error bound: 
$\|A-ZZ^TA\|_2 \leq (1+\eps) \|A-A_k\|_2$
\item Per vector error bound: 
$|u_i^T AA^T u_i - z_i^T AA^T z_i| \leq \eps \sigma_{k+1}^2$

for all $i$. Here $u_i$ denotes the $i$th left singular vector of $A$, and $\sigma_{k+1}$ is the ($k+1$)th singular value of $A$, and $z_i$ is the $i$th column of the matrix $Z$ returned by \si.  
%To achieve this error bound, they did an extra step of projecting $A$ onto $Z$, taking the best rank $k$ subspace of it, and projecting $Z$ onto that subspace. This allowed them to get the right ordering in singular vectors of $Z$. This extra step does not change the returned subspace, hence is not mentioned in Algorithm~\ref{alg:sim_itr}.
\end{enumerate}

In addition, for a constant $\eps$, \si runs in $\tilde{O}(\nnz(A))$ time.

In this paper, we show that \sfd can replace the computation of an exact SVD by using the results of~\cite{musco2015stronger} with $\eps$ being a constant.  This alteration does give up the optimal asymptotic accuracy (matching that of \fd).

%%%%%%%%%%%%%%%%%%%%%%%%%%%%%%%%%%
%%%%%%%%%%%%%%%%%%%%%%%%%%%%%%%%%%
\section{Sparse Frequent Directions}
%In this section we describe the \sfd algorithm, providing details about its accuracy and computational costs.
The \sfd (\SFD) algorithm is described in Algorithm \ref{alg:SparseFD}, and is an extension of \fd to sparse matrices.
It receives the rows of an input matrix $A$ in a streaming fashion and maintains a sketch $B$ of $\ell$ rows. Initially $B$ is empty. 
On receiving rows of $A$, \SFD stores non-zeros in a buffer matrix $A'$. 
The buffer is deemed full when it contains $\ell d$ non-zeros or $d$ rows. 
\SFD then calls \BSSh to produce its sketch matrix $B'$ of size $\ell \times d$.
Then, it updates its ongoing sketch $B$ of the entire stream by merging it with the (dense) sketch $B'$ using \DSh.
\begin{algorithm}[H]
\caption{\sfd}
\label{alg:SparseFD}
\begin{algorithmic}
\STATE \textbf{Input:} $A \in \R^{n \times d}$, an integer $\ell \le d$, failure probability $\delta$
\STATE $B = 0^{\ell \times d}$, \; $A' = 0^{0 \times d}$
\FOR {$a \in A$}
\STATE $A' = [A' ; a]$
\IF {$\nnz(A') \ge \ell d$ {\bf or} $\rows(A') = d$} %{\bf or} $a$ is the last row in $A$}
\STATE $B' = \BSSh(A',\ell,\delta)$
\STATE $B =  \DSh([B; B'],\ell)$
\STATE $A' = 0^{0 \times d}$
\ENDIF
\ENDFOR
\STATE \textbf{return} $B$ %\hfill \# $B \in \R^{\ell \times d}$
\end{algorithmic}
\end{algorithm}

%The function $\VSSh$ deals with annoying randomization issues originating from \textsc{\si}, that we will delay until Section \ref{sec:randomness}.  First we will ignore these issues and analyze \textsc{Sparse}\fd as if it calls $\SSh$ in place of $\VSSh$. 
\BSSh amplifies the success probability of another algorithm \SSh in Algorithm \ref{alg:SparseShrink}.   \SSh runs \si instead of a full SVD to take advantage of the sparsity of its input $A'$.  However, as we will discuss, by itself \SSh has too high of a probability of failure.  
Thus we use \BSSh which keeps running \SSh and probabilistically verifying the correctness of its result using \vs, until it decides that the result is correct with high enough probability. 
Each of \DSh, \SSh, and \BSSh produce sketch matrices of size $\ell \times d$.

%The \textsc{SparseShrink} is described in Algorithm~\ref{alg:SparseShrink}. It takes a sparse matrix $A$ and a rank parameter $\ell$, and approximates the top $\ell$ left singular vectors of $A$ via a call to \textsc{Simultaneous Iteration}. It then projects $A$ onto the approximated subspace and reduces its rank. At the end, it returns a dense matrix $B$ of rank $\ell$. Similarly, the \textsc{DenseShrink} algorithm reduces rank of of a matrix, however, it operates on a dense matrix $A\in\R^{m\times d}$.  It is described in Algorithm \ref{alg:DenseFD}.

\begin{algorithm}[H]
\begin{algorithmic}
\caption{\textsc{SparseShrink}}
\label{alg:SparseShrink}
\STATE \textbf{Input}: $A' \in \R^{m \times d}$, an integer $ \ell \le m$

  \STATE $Z = \si(A',\ell,1/4)$ 
  \STATE $P = Z^TA'$, \; $[H,\Lambda, V] = \svd(P,\ell) $ 
  \STATE $\tilde \Lambda = \sqrt{\Lambda^2 - \lambda_{\ell}^2 I_\ell}$
  \STATE  $B' = \tilde \Lambda V^T$
%  \STATE $\Delta = (\|A'\|_F^2 - \|B'\|_F^2) / \alpha \ell$ \;\;\; for $\alpha = 6/41 $
  \STATE \textbf{return} $B'$
\end{algorithmic}
\end{algorithm}

%\BSSh boosts the success probability of \SSh by running it each time and verifying the correctness of its result. In case of failed verification, \SSh is ran again.

\begin{algorithm}[H]
\begin{algorithmic}
\caption{\BSSh}
\label{alg:BoostedSparseShrink}
\STATE \textbf{Input}: $A' \in \R^{m \times d}$, integer $ \ell \le m$, failure probability $\delta$
\WHILE{True}
  \STATE $B' = \SSh(A',\ell)$
  \STATE $\Delta = (\|A'\|_F^2 - \|B'\|_F^2) / \alpha \ell$ \;\;\; for $\alpha = 6/41 $
  \IF {$\textsc{VerifySpectral}((A'^TA' - B'^T B')/(\Delta/2), \delta)$}
    \STATE \textbf{return} $B'$
  \ENDIF
\ENDWHILE
\end{algorithmic}
\end{algorithm}

\begin{algorithm}[H]
\caption{\textsc{DenseShrink}}
\label{alg:DenseFD}
\begin{algorithmic}
\STATE \textbf{Input}: $A \in \R^{m \times d}$, an integer $\ell \le m$
\STATE $[H,\Lambda,V] = \svd(A,\ell)$ %\hfill \#$\Lambda = \diag(\lambda_1,\ldots,\lambda_\ell)$ and $V\in\R^{d\times \ell}$
\STATE $\tilde \Lambda = \sqrt{\Lambda^2 - \lambda^2_\ell I_\ell}$% \hfill \# $\Lambda_\ell = \diag(\lambda_1,\ldots,\lambda_\ell)$
\STATE $B = \tilde \Lambda V^T$
\STATE \textbf{Return} $B$ %\hfill \# $V^T_\ell$ contains the first $\ell$ rows of $V^T$.
\end{algorithmic}
\end{algorithm}

Our main result is stated in the next theorem. It follows from combining the proofs contained in the subsections below.
\begin{theorem}[main result]\label{thm:bounds}
Given a sparse matrix $A\in\R^{n\times d}$ and an integer $\ell \le d$, \sfd computes a small sketch $B \in \R^{\ell \times d}$ such that with probability at least $1-\delta$ for $\alpha = 6/41$ and any $0 \leq k < \alpha \ell$,
\[
\|A^T A - B^T B\|_2 \leq \frac{1}{\alpha \ell - k} \|A -A_k\|_F^2 
%\mbox{\;\;\;\;\;and\;\;\;\;\;}
\]
and
\[
\|A - \pi_{B_k}(A)\|_F^2 \leq \frac{\ell}{\ell - k/\alpha} \|A - A_k\|_F^2.
\]
The total memory footprint of the algorithm is $O(d\ell)$ and its expected running time is 
\[
O\left( \nnz(A)\ell \log(d) +  \nnz(A)\log (n/\delta)+ n\ell^2 + n\ell\log(n/\delta) \right). 
\]
\end{theorem}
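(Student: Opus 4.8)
The plan is to derive the two error bounds from Lemma~\ref{lem:desai} by showing that the sketch $B$ returned by \sfd satisfies Properties~1--3 of Section~\ref{related:fd} with respect to $A$, with the constant $\alpha = 6/41$ and the positive quantity $\Delta = (\|A\|_F^{2}-\|B\|_F^{2})/(\alpha\ell)$; the running-time and memory claims are obtained separately by a per-block accounting together with a bound of $O(1)$ on the expected number of iterations of the \texttt{while} loop in \BSSh. For the per-block guarantee, fix a full buffer block $A'$, let $Z$ be as produced by \si inside \SSh, and write $P = Z^{T}A' = H\Lambda V^{T}$. Then $B'^{T}B' = V\tilde\Lambda^{2}V^{T} = A'^{T}ZZ^{T}A' - \lambda_\ell^{2}VV^{T}$, so $A'^{T}A' - B'^{T}B' = A'^{T}(I-ZZ^{T})A' + \lambda_\ell^{2}VV^{T} \succeq 0$, which is Property~1, and $\|A'\|_F^{2} - \|B'\|_F^{2} = \|A'-ZZ^{T}A'\|_F^{2} + \ell\lambda_\ell^{2} \ge 0$, so the $\Delta$ set in \BSSh makes Property~3 hold with equality. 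Property~2, $\|A'^{T}A' - B'^{T}B'\|_2 \le \Delta$, is not automatic; it is precisely what \vs is invoked to certify, so conditioned on no \vs call ever wrongly accepting, every block sketch $B'_j$ returned by \BSSh satisfies Properties~1--3 with respect to $A'_j$ with $\alpha = 6/41$ and its own $\Delta_j$.

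\noindent\textbf{Termination of \BSSh and the main obstacle.} I would next show that \SSh succeeds --- producing $B'$ with $\|A'^{T}A' - B'^{T}B'\|_2 \le \Delta/2$, the factor of two being what lets \vs's one-sided power-method test certify Property~2 --- with probability at least a fixed constant. This is where the guarantees of \si at $\eps = 1/4$ enter: from the identity above, $\|A'^{T}A' - B'^{T}B'\|_2 \le \|(I-ZZ^{T})A'\|_2^{2} + \lambda_\ell^{2}$, while $\Delta = (\|A'-ZZ^{T}A'\|_F^{2} + \ell\lambda_\ell^{2})/(\alpha\ell)$; the spectral bound gives $\|(I-ZZ^{T})A'\|_2 \le \tfrac{5}{4}\,\sigma_{\ell+1}(A')$, the Frobenius bound gives $\|A'-ZZ^{T}A'\|_F \ge \|A'-A'_\ell\|_F$, and the per-vector bound controls $\lambda_\ell$ in terms of $\sigma_\ell(A')$ and $\sigma_{\ell+1}(A')$. \textbf{Showing that these three estimates combine to yield $\|A'^{T}A' - B'^{T}B'\|_2 \le \Delta/2$ is the technical heart of the proof, and $\alpha = 6/41$ is exactly the threshold at which the resulting chain of inequalities closes --- this is the step I expect to be the main obstacle.} Because \si meets all three of its guarantees with probability $\ge 1-c$ for a small constant $c$, \SSh succeeds with probability $\ge 1-c$, so \BSSh terminates in $O(1)$ expected rounds.

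\noindent\textbf{From the blocks to $A$, and the failure probability.} The output $B$ is built by repeatedly setting $B \leftarrow \DSh([B;B'_j])$. As it performs an exact truncated SVD, \DSh is, just as for plain \fd, a valid shrink satisfying Properties~1--3 with $\alpha = 1/2$, so the run of \DSh merges on $B'_1,\dots,B'_t$ itself satisfies Properties~1--3, and in particular $\big\|\sum_j (B'_j)^{T}B'_j - B^{T}B\big\|_2 \le 2\big(\sum_j d_j\big)/\ell$, where $d_j \ge 0$ is the Frobenius-norm drop of the $j$th merge. Combining this with the summed per-block bound $\big\|\sum_j\big((A'_j)^{T}A'_j - (B'_j)^{T}B'_j\big)\big\|_2 \le \sum_j \Delta_j$ and the telescoping identity $\|A\|_F^{2}-\|B\|_F^{2} = (6/41)\,\ell\sum_j\Delta_j + \sum_j d_j$, and using $2 < 41/6$, shows that $B$ satisfies Properties~1--3 with respect to $A$ with $\alpha = 6/41$ and $\Delta = (\|A\|_F^{2}-\|B\|_F^{2})/((6/41)\ell)$, whence Lemma~\ref{lem:desai} gives the two error bounds. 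For the failure probability, the only harmful event is a wrongful ``accept'' by some \vs call, of which there is exactly one per block; running \vs --- a one-sided, power-method based spectral-norm tester that uses $A'^{T}A'-B'^{T}B' \succeq 0$ and accesses its argument only through matrix--vector products --- with internal failure probability $\delta/T$, where $T$ is the number of blocks, and a union bound gives overall failure at most $\delta$, and since $T \le \nnz(A)/(\ell d) + n/d + 1$ and $d \le n$ the logarithmic overhead is $O(\log(n/\delta))$.

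\noindent\textbf{Running time and memory.} These follow from a term-by-term count per block: on a block with $m_j$ rows, \si costs $O(\nnz(A')\,\ell\log d)$ for its $\Theta(\log d)$ sparse matrix multiplications ($\eps$ is constant and the block has at most $d$ rows) plus $O(m_j\ell^{2})$ for Gram--Schmidt; forming $P=Z^{T}A'$, its $\ell\times d$ SVD, and the \DSh merge cost $O(\nnz(A')\,\ell + d\ell^{2})$; and each \vs invocation costs $O\big((\nnz(A')+\ell d)\log(n/\delta)\big)$. The \texttt{while} loop multiplies these by $O(1)$ in expectation, and summing over all blocks --- using $\sum_j\nnz(A'_j) = \nnz(A)$, $\sum_j m_j = n$, and $T\cdot d\ell^{2} = O(n\ell^{2}+\nnz(A)\,\ell)$ --- gives the claimed $O\big(\nnz(A)\,\ell\log d + \nnz(A)\log(n/\delta) + n\ell^{2} + n\ell\log(n/\delta)\big)$. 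The $O(d\ell)$ memory bound is immediate: $B$, $Z$, $P$ and the SVD factors are each of size $O(d\ell)$, and the sparse buffer $A'$ holds at most $\ell d$ nonzeros.
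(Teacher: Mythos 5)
Your proposal is correct and follows essentially the same route as the paper: per-block verification of the three shrink properties for \SSh (Property 1 deterministically, Properties 2 and 3 from the \si guarantees at $\eps=1/4$), boosting via \vs with a union bound over all its invocations, mergeability of the block sketches under \DSh, and the same per-block runtime and memory accounting. The step you flag as the main obstacle closes exactly as you anticipate: the spectral bound and Courant--Fischer give $\|A'^TA'-B'^TB'\|_2 \le (25/16)\,s_{\ell+1}^2 + \lambda_\ell^2 \le (41/16)\,s_\ell^2$, the per-vector bound gives $\|A'\|_F^2-\|B'\|_F^2 \ge \ell\lambda_\ell^2 \ge (3/4)\,\ell s_\ell^2$, and $(3/4)/(6/41)=41/8=2\cdot(41/16)$, so $\alpha=6/41$ is indeed the threshold at which the chain closes.
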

%\noindent It is convenient to set $\ell = \lceil 1/\eps\alpha + k/\alpha\rceil$ which yields $\|A^T A - B^T B\|_2 \leq \eps \|A -A_k\|_F^2$ or to set $\ell = \lceil k/\eps\alpha + k/\alpha\rceil$ which yields $\|A - \pi_{B_k}(A)\|_F^2 \leq (1+\eps) \|A - A_k\|_F^2$.
%Moreover, it is reasonable to expect the number of non-zeros per row in $A$ to be lager than $\ell$ and for $n$, $d$ and $1/\delta$ to be at most polynomial in one another. In this setting, the running time is dominated by $O\left( \nnz(A)\ell \log(d)\right)$.

%%%%%%%%%%%%%%%%%%%%%%%%%%%%%%%%%%%%%%%%%%%%%%%%%%%%%%%%%%%%%%%%%%
\subsection{Success Probability}
\label{sec:suc-prob}
\SSh, described in Algorithm \ref{alg:SparseShrink}, calls \si to approximate the top rank $\ell$ subspace of $A'$.
As \si is randomized, it fails to converge to a good subspace when the initial choice of the random matrix $G$ does not sufficiently align with the top $\ell$ singular vectors of $A'$ (see Algorithm \ref{alg:sim_itr}). This occurs with probability at most $\rho_\ell = O(1/\sqrt{\ell})$.
%To be more precise, let $s_\ell$ be the $\ell$th singular value of $A'$. We call a \si run ``successfull'' if the matrix $B'$ computed from its output satisfies the following: 
In Section \ref{subsub:SSh}, we prove that with probability of at least $1-\rho_\ell$ that \SSh satisfies the three properties required for Lemma \ref{lem:desai} using $\alpha = 6/41$ and $\Delta = 41/8\; s_\ell^2$, but replacing Property $2$ with a stronger version
\begin{itemize}
\item Property 2 (strengthened):  
$\|A'^TA' - B'^T B'\|_2 \leq (\Delta/2) = 41/16\; s_\ell^2$
%\item[(P3):] $\|A'\|_F^2 - \|B'\|_F^2 \geq (2\alpha) (\Delta/2) \ell = \frac{12}{41} \cdot \frac{41}{16} s_\ell^2 \cdot \ell = \frac{3}{4} s_\ell^2 \ell = \alpha \Delta \ell$,
\end{itemize}
where $s_\ell$ denotes the $\ell$th singular value of $A'$.
%We demonstrate these bounds and their importance to bounding the error in \sfd in Section \ref{sec:error}.  

However, for the proof of \sfd we require that {\it all} \SSh runs to be successful. 
The failure probability of \SSh, which is upper bounded by $O(1/\sqrt \ell)$, is high enough that a simple union bound would not give a meaningful bound on the failure probability of \sfd. We therefore reduce the failure probability of each \BSSh, by wrapping each call of \SSh in the verifier \vs.  If \vs does not verify the correctness, then it reruns \SSh and tries again until it can verify it.   
But to perform this verification efficiently, we need to loosen the definition of correctness.  In particular, we say \SSh is successful if the sketch $B'$ computed from its output satisfies $\|A'^T A' - B'^T B'\|_2 \leq \Delta$ (the original Property 2 specification in Section \ref{related:fd}), where $\Delta = (\|A'\|_F^2 - \|B'\|_F^2) / \alpha \ell$.  Combining the two inequalities through $\Delta$, a successful run implies that $\|A'^T A' - B'^TB'\|_2 \leq (\|A'\|_F^2 - \|B'\|_F^2) / \alpha \ell$.  
\vs verifies the success of the algorithm by approximating the spectral norm of $(A'^T A' - B'^T B')/(\Delta/2)$; it does so by running the power method for $c \cdot \log(d/\delta_i)$ steps for some constant $c$.  

\begin{algorithm}[H]
\caption{\vs}
\label{alg:verifyspectral}
\begin{algorithmic}
\STATE \textbf{Initialization} persistent $i=0$ ($i$ retains its state between invocations of this method)
\STATE \textbf{Input}: Matrix $C\in\R^{d\times d}$, failure probability $\delta$
\STATE $i = i+1$ and $\delta_i = \delta/2i^2$
%\STATE $q = c\cdot\log(d/\delta_i)$
\STATE Pick $x$ uniformly at random from the unit sphere in $\R^d$.  
\STATE \textbf{if} $\|C^{c \cdot \log(d/\delta_i)} x\| \leq 1$ \textbf{return} \textsc{True}
\STATE \textbf{else} \textbf{return} \textsc{False}
\end{algorithmic}
\end{algorithm}
%\end{minipage}
%&
%\begin{minipage}[t!]{0.4\textwidth}
%\vspace{0 pt}
%\begin{algorithm}[H]
%\caption{\textsc{DenseShrink}}
%\label{alg:DenseFD}
%\begin{algorithmic}
%\STATE \textbf{Input}: $A \in \R^{m \times d}$, an integer $\ell \le m$
%\STATE $[H,\Lambda,V] = \svd(A,\ell)$
%\STATE $\tilde \Lambda = \sqrt{\Lambda_\ell^2 - \lambda^2_\ell I_\ell}$% \hfill \# $\Lambda_\ell = \diag(\lambda_1,\ldots,\lambda_\ell)$
%\STATE \textbf{Return} $B = \tilde \Lambda V_\ell^T$ %\hfill \# $V^T_\ell$ contains the first $\ell$ rows of $V^T$.
%\end{algorithmic}
%\end{algorithm}
%\end{minipage}
%\end{tabular}
%\end{table}
\begin{lemma} \label{lem:vs}
The $\vs$ algorithm returns \textsc{True} if $\|C\|_2 \le 1$. 
If $\|C\|_2 \ge 2$ it returns \textsc{False} with probability at least $1-\delta_i$.
\end{lemma}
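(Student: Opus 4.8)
The plan is to analyze the behavior of the power iteration $x \mapsto C^{t}x$ for $t = c\log(d/\delta_i)$ applied to a uniformly random unit vector $x$, splitting into the two cases of the lemma. Since $C = (A'^TA' - B'^TB')/(\Delta/2)$ is symmetric (both $A'^TA'$ and $B'^TB'$ are symmetric PSD-type matrices built from Gram products), it is diagonalizable with an orthonormal eigenbasis and real eigenvalues $\mu_1,\dots,\mu_d$ with $\|C\|_2 = \max_j|\mu_j|$.

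First, the ``completeness'' direction: if $\|C\|_2 \le 1$, then every eigenvalue satisfies $|\mu_j| \le 1$, so $\|C^t x\| \le \|C\|_2^t \|x\| \le 1$ for every unit vector $x$, and in particular for the random $x$ drawn in the algorithm. Hence \vs returns \textsc{True} deterministically (probability $1$), which is exactly the first claim.

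Second, the ``soundness'' direction: suppose $\|C\|_2 \ge 2$, and let $v$ be a unit eigenvector achieving $|\mu| = \|C\|_2 \ge 2$. Write $x = \sum_j \alpha_j v_j$ in the eigenbasis; then $\|C^t x\|^2 = \sum_j \mu_j^{2t}\alpha_j^2 \ge \mu^{2t}\alpha^2$ where $\alpha = \langle x, v\rangle$. So $\|C^t x\| \ge 2^t |\alpha|$, and this exceeds $1$ whenever $|\alpha| > 2^{-t}$. The only way the test can wrongly return \textsc{True} is if the random unit vector $x$ has tiny projection onto $v$, i.e.\ $|\langle x,v\rangle| \le 2^{-t}$. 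The key step is the standard anticoncentration fact for the projection of a uniform unit vector in $\R^d$ onto a fixed direction: $\Pr[|\langle x,v\rangle| \le \tau] \le O(\tau\sqrt{d})$ (one can obtain this from the density of a single coordinate of a random unit vector, or equivalently by writing $x = g/\|g\|$ for $g \sim \mathcal{N}(0,1)^d$, noting $\|g\| \le O(\sqrt d)$ with overwhelming probability and $|g_1|$ has bounded density near $0$). Plugging $\tau = 2^{-t}$ with $t = c\log(d/\delta_i)$ gives failure probability $O(\sqrt{d}\, 2^{-c\log(d/\delta_i)}) = O\big((d/\delta_i)^{-c\log 2}\sqrt d\big)$, which is at most $\delta_i$ once the constant $c$ is chosen large enough (any $c > 2/\log 2$ comfortably suffices, absorbing the $\sqrt d$ and the hidden constant). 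This yields the claimed $1-\delta_i$ success probability.

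I expect the main obstacle — really the only nontrivial point — to be pinning down the anticoncentration bound $\Pr[|\langle x,v\rangle|\le\tau] \le O(\tau\sqrt d)$ with an explicit enough constant to justify the choice of $c$; the rest is the elementary spectral estimate $\|C^t x\| \ge |\mu|^t|\langle x,v\rangle|$ together with $|\mu|\ge 2$. A secondary subtlety worth a sentence is confirming that $C$ is genuinely symmetric so that the eigenbasis argument applies; if one prefers to avoid that, the same bound follows by using the top singular vector of $C$ in place of the top eigenvector and bounding $\|C^tx\|$ from below along the corresponding singular direction.
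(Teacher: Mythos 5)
Your proposal is correct and follows essentially the same route as the paper's proof: the trivial bound $\|C^t x\|\le\|C\|_2^t$ for the \textsc{True} case, and for the \textsc{False} case the lower bound $\|C^t x\|\ge 2^t|\langle v_1,x\rangle|$ combined with the anticoncentration estimate $\Pr[|\langle v_1,x\rangle|\le \tau]=O(\tau\sqrt d)$ for a uniformly random unit vector. Your explicit remark that $C$ is symmetric (so the top singular direction is an eigenvector and the spectral lower bound is legitimate) is a small point the paper leaves implicit, but the argument is otherwise the same.
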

\begin{proof}
If $\|C\|\le 1$ than $\|C^{c\cdot \log(d/\delta_i)} x\|  \le \|C\|^{c \cdot \log(d/\delta_i)} \|x\| \le 1$.
If $\|C\|\ge 2$, consider execution $i$ of the method. Let $v_1$ denote the top singular vector of $C$. Then  $\|C^{c\cdot\log(d/\delta_i)} x\| \ge |\langle v_1,x \rangle| 2^{c\cdot \log(d/\delta_i)} \ge 1$, for some constant $c$ as long as  $|\langle v_1,x \rangle| = \Omega( \operatorname{poly}(\delta_i/d))$. Let $\Phi(t')$ denote the density function of the random variable 
$t' = \langle v_1,x \rangle$. Then $\Pr[|\langle v_1,x \rangle| \le t] = \int_{-t}^{t}\Phi(t')dt'  \le 2t\Phi(0) = O(t\sqrt{d})$. Setting the failure probability to be at most $\delta_i$, we conclude that $|\langle v_1,x \rangle| = \Omega(\delta_i/\sqrt{d})$ with probability at least $1-\delta_i$. 
\end{proof}

Therefore, \vs fails with probability at most  $\delta_i$ during execution $i$. 
If any of \vs runs fail, \BSSh and hence \sfd potentially fail.
Taking the union bound over all invocations of \vs we obtain that \sfd fails with probability at most $\sum \delta_i \le \sum_{i=1}^{\infty} \delta/2 i^2 \le \delta$, hence it succeeds with probability at least $1-\delta$. 
%Since \BSSh is the only source of randomness we can accepts failure probability $\delta$ and assume henceforth that all calls for \BSSh are successful. 

%Algorithm \ref{alg:SparseShrink} describes $\SSh$. In each iteration of the while loop, it verifies the {\em{goodness}} of sketch $B$ by approximating the spectral norm of the residual matrix $(A^TA - B^TB)/\Delta$ via $\vs$. A sketch $B$ is {\em good} if $\|A^TA - B^TB\|_2 \leq \Delta$.

%%%%%%%%%%%%%%%%%% run time analysis %%%%%%%%%%%%%%%%%%%%%%%%%%%%%%%
\subsection{Space Usage and Runtime Analysis}\label{sec:runtime_analysis}
Throughout this manuscript we assume the constant-word-size model. 
Integers and floating point numbers are represented by a constant number of bits. 
Random access into memory is assumed to require $O(1)$ time.
In this model, multiplying a sparse matrix $A'$ by a dense vector requires $O(\nnz(A'))$ operations and storing $A'$ requires $O(\nnz(A'))$ bits of memory. 

\begin{fact}
The total memory footprint of \sfd is $O(d\ell)$.
\end{fact}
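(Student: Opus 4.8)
The plan is to account for every piece of memory the algorithm holds simultaneously and show each is $O(d\ell)$. First I would identify the persistent state maintained between stream items: the ongoing sketch $B \in \R^{\ell \times d}$, which costs $O(d\ell)$ since it is stored densely. Next I would bound the buffer $A'$. By construction (the \texttt{if} test in Algorithm~\ref{alg:SparseFD}), $A'$ is flushed as soon as $\nnz(A') \ge \ell d$ or $\rows(A') = d$; just before a flush the buffer therefore holds at most $\ell d$ non-zeros plus possibly the final appended row $a$ which has at most $d$ non-zeros, so $\nnz(A') \le \ell d + d = O(d\ell)$ at all times. Using the constant-word-size model, storing $A'$ in a sparse format costs $O(\nnz(A')) = O(d\ell)$ bits.

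The remaining work is to check that the subroutines invoked on the buffer — \BSSh, \SSh (hence \si), \vs, and \DSh — do not blow up the footprint beyond $O(d\ell)$. For \SSh: \si on $A' \in \R^{m\times d}$ with $m = \rows(A') \le d$ and rank parameter $\ell$ returns $Z \in \R^{m \times \ell}$, and the only intermediate objects are products of $A'$ (or $A'^TA'$, applied implicitly as two sparse mat-vecs) with the $d \times \ell$ Gaussian block $G$ and the running $m \times \ell$ orthonormal factor; all of these are $O(d\ell)$ or $O(m\ell) = O(d\ell)$. Then $P = Z^TA' \in \R^{\ell \times d}$ and its rank-$\ell$ SVD have size $O(d\ell)$, and $B' = \tilde\Lambda V^T \in \R^{\ell \times d}$ is $O(d\ell)$. \DSh acts on $[B;B'] \in \R^{2\ell \times d}$, whose SVD and output are again $O(d\ell)$. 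For \vs: it is handed the $d\times d$ matrix $C = (A'^TA' - B'^TB')/(\Delta/2)$, but this must never be materialized densely — instead $C x$ is computed for a vector $x$ as $(A'^T(A'x) - B'^T(B'x))/(\Delta/2)$ using two sparse mat-vecs against $A'$ (cost $O(\nnz(A')) = O(d\ell)$ space) and two dense mat-vecs against $B'$ ($O(d\ell)$ space), and the power iteration only ever stores the current $O(d)$ vector. Finally, \BSSh holds only $B'$, the scalar $\Delta$, and $\|A'\|_F^2, \|B'\|_F^2$ across its loop iterations, all $O(d\ell)$.

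Summing: at any moment the algorithm holds $B$ ($O(d\ell)$), the buffer $A'$ ($O(d\ell)$), and the scratch space of at most one active subroutine call ($O(d\ell)$), so the total is $O(d\ell)$ words, i.e. $O(d\ell)$ under the constant-word-size model. The one subtlety — and the only place the bound could fail — is the handling of the $d\times d$ argument to \vs: the claim relies on implementing the operator $C$ matrix-free rather than forming $A'^TA'$ or $B'^TB'$ explicitly, which would cost $\Theta(d^2)$. I would state this implementation convention explicitly and note that it is exactly why \vs is phrased in terms of applying $C^{c\log(d/\delta_i)}$ to a single vector rather than computing $\|C\|_2$ directly.
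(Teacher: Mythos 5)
Your proof is correct and follows essentially the same accounting as the paper's: every matrix the algorithm holds is either $\ell\times d$ (or $2\ell\times d$ in \DSh, or $m\times\ell$ with $m\le d$ in \si) and hence $O(d\ell)$, while the buffer $A'$ has $O(d\ell)$ non-zeros by the flush condition. Your explicit observation that the $d\times d$ argument to \vs must be applied matrix-free (never materialized) is a valid refinement that the paper's proof of this fact leaves implicit, though its runtime analysis of \vs relies on exactly that convention.
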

\begin{proof}
It is easy to verify that, except for the buffer matrix $A'$, the algorithm only manipulates $\ell \times d$ matrices; in particular, observe that the ($\rows(A') = d$) condition in \sfd ensures that $m = d$ in \SSh, and in \DSh also $m = 2\ell$.
Each of these $\ell \times d$ matrices clearly require at most $O(d\ell)$ bits of memory. 
The buffer matrix $A'$ contains at most $O(d\ell)$ non-zeros and therefore does not increase the space complexity of the algorithm.
\end{proof}

We turn to bounding the expected runtime of \sfd which is dominated by the cumulative running times of  \DSh and \BSSh.
Denote by $T$ the number of times they are executed. It is easy to verify $T \le \nnz(A)/d\ell + n/d$. 
Since \DSh runs in $O(d\ell^2)$ time deterministically, the total time spent by \DSh through $T$ iterations is $O(Td\ell^2) = O(\nnz(A)\ell + n\ell^2)$.

The running time of \BSSh is dominated by those of \SSh and \vs, and its expected number of iterations. 
Note that, in expectation, they are each executed on any buffer matrix $A'_i$ a small constant number of times because \vs succeeds with probability (much) greater than $1/2$. For asymptotic analysis it is identical to assuming they are each executed once.

Note that the running time of \SSh on $A'_i$ is $O(\nnz(A'_i)\ell \log(d))$. 
Since $\sum_i \nnz(A'_i) = \nnz(A)$ we obtain a total running time of $O(\nnz(A)\ell \log(d))$.
The $i$th execution of \vs requires $O(d\ell \log(d/\delta_i))$ operations. 
This, because it multiplies $A'^TA' - B'^TB'$ by a single vector $O(\log(d/\delta_i))$ times, and both $\nnz(A') \leq O(d \ell)$ and $\nnz(B') \leq d \ell$.
In expectation \vs is executed $O(T)$ times, therefore total running time of it is 
\begin{align*}
O(d\ell \sum_{i=1}^{O(T)} \log(d/\delta_i)) = O(d\ell \sum_{i=1}^{O(T)} \log(d i^2/\delta))
=
O(T d\ell \log(Td/\delta)) = O((\nnz + n\ell)\log(n/\delta)).  
\end{align*}
%From the above it is clear that \vs is executed $O(T)$ times which bounds from below the minimal failure probability $\delta'$. 
%We obtain the each execution of \vs is bounded by $O(d\ell \log(n))$ operations.
%Therefore, the total time spent executing \vs is $O(T d\ell \log(n)) = O(\nnz(A)\log(n) + n\ell \log(n))$.

Combining the above contributions to the total running time of the algorithm we obtain Fact~\ref{lem:sparseshrink_runtime}.
\begin{fact}
\label{lem:sparseshrink_runtime}
Algorithm \sfd runs in expected time of 
\[
O(\nnz(A)\ell \log(d) + \nnz(A)\log(n/\delta) + n\ell^2 + n\ell \log(n/\delta)).  
\]
\end{fact}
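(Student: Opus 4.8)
The plan is to charge all of \sfd's work to the three routines that actually compute — \DSh, \SSh, and \vs — each invoked only when the buffer $A'$ is flushed, and then sum over flushes. The first step is to bound $T$, the number of flushes. A flush is triggered either because $A'$ reached $\ell d$ nonzeros or because it reached $d$ rows (plus possibly one final partial flush at end of stream), and the two kinds of triggers consume disjoint quantities of input, so $T = O(\nnz(A)/(d\ell) + n/d)$; in particular $T = O(n)$ since $\nnz(A)\le nd$ and $d\le n$. Moreover each row of $A$ lands in exactly one buffer, so $\sum_i \nnz(A'_i) = \nnz(A)$ and $\sum_i \rows(A'_i) = n$, and the $\rows(A')=d$ guard forces $m\le d$ in every call.

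Next I would account for \DSh: it runs a rank-$\ell$ SVD of a $2\ell\times d$ matrix, deterministically costing $O(d\ell^2)$, so over $T$ flushes this is $O(Td\ell^2) = O(\nnz(A)\ell + n\ell^2)$. For \SSh on buffer $A'_i$, the dominant cost is the call $\si(A'_i,\ell,1/4)$: with block size $\ell$ and constant accuracy it performs $q = \Theta(\log(m/\eps)/\eps) = O(\log d)$ products against $A'_i$ and $A'^T_i$, each costing $O(\nnz(A'_i)\ell)$, for a total of $O(\nnz(A'_i)\ell\log d)$; the Gram–Schmidt costs $O(m\ell^2)$, the product $P = Z^TA'_i$ costs $O(\nnz(A'_i)\ell)$, and $\svd(P,\ell)$ costs $O(d\ell^2)$. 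Summing over buffers with $\sum_i \nnz(A'_i) = \nnz(A)$, $\sum_i m_i = n$, and absorbing the $O(Td\ell^2)$ piece already counted, the \SSh total is $O(\nnz(A)\ell\log d + n\ell^2)$.

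For \BSSh I need its while-loop to run $O(1)$ times in expectation on each buffer, after which its cost is one \SSh call plus one \vs call. By Lemma~\ref{lem:vs}, \vs returns \textsc{True} with certainty whenever $\|C\|_2\le 1$, i.e.\ whenever \SSh's output satisfies the strengthened Property~2, and by Section~\ref{subsub:SSh} this holds with probability at least $1-\rho_\ell$, which exceeds a fixed constant (say $1/2$) once $\ell$ is above a constant (for smaller $\ell$ every quantity here is $O(1)$). Hence the iteration count is stochastically dominated by a geometric random variable with constant success probability and is $O(1)$ in expectation. Each \vs call applies $C = (A'^TA' - B'^TB')/(\Delta/2)$ to a vector $O(\log(d/\delta_i))$ times without forming $C$ — one product costs $O(\nnz(A')+\nnz(B')) = O(d\ell)$ — so execution $i$ costs $O(d\ell\log(d/\delta_i))$; since the persistent counter $i$ reaches $O(T)$ in expectation and $\delta_i = \delta/2i^2$, the total is $\sum_{i=1}^{O(T)} O(d\ell\log(di^2/\delta)) = O(Td\ell\log(Td/\delta)) = O((\nnz(A)+n\ell)\log(n/\delta))$, using $Td\ell = O(\nnz(A)+n\ell)$ and $Td = O(n^2)$.

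Adding the three contributions gives $O(\nnz(A)\ell\log d + \nnz(A)\log(n/\delta) + n\ell^2 + n\ell\log(n/\delta))$, as claimed. I expect the main obstacle to be the $O(1)$-expected-iterations claim for \BSSh: it rests on the one-sided guarantee of Lemma~\ref{lem:vs} (a correct \SSh output is never rejected) together with a constant lower bound on \SSh's success probability, which only holds for $\ell$ above a fixed constant, so the regime $\ell = O(1)$ must be dispatched separately; and one must be slightly careful that the per-buffer verification costs are coupled only through the single shared counter $i$, which is exactly what lets the $\log(d/\delta_i)$ terms telescope to $O(T\log(Td/\delta))$ rather than something larger.
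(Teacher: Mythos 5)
Your proof is correct and follows essentially the same route as the paper: bound the number of flushes $T$, charge $O(Td\ell^2)$ to \DSh, $O(\nnz(A'_i)\ell\log d)$ per buffer to \SSh, and $O(d\ell\log(d/\delta_i))$ to the $i$th \vs call, then telescope. You actually justify the $O(1)$ expected iteration count of \BSSh more carefully than the paper does (which simply asserts that \vs succeeds with probability much greater than $1/2$), but this is a refinement of the same argument rather than a different approach.
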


%%%%%%%%%%%%%%%%%%%%%%%%%%%%%%%%%%%%%%%%%%%% Error Analysis %%%%%%%%%%%%%%%%%%%%%%%%%%%%%%%%%%%%%%
\subsection{Error Analysis}
\label{sec:error}

We turn to proving the error bounds of Theorem \ref{thm:bounds}.  
Our proof is divided into three parts.  
We first show that \SSh obtains the three properties needed for Lemma \ref{lem:desai} with probability at least $1-\rho_\ell$, and with the constraint on Property 2 strengthed by a factor $1/2$.
Then we show how loosening Property 2 back to its original bound enables \BSSh to succeed with probability $1-\delta_i$ for some $\delta_i \ll \rho_\ell$. 
Finally we show that due to the mergeability of \fd~\cite{Lib12}, discussed in Section~\ref{related:fd}, the \sfd algorithm obtains the same error guarantees as \BSSh with probability $1-\delta$ for a small $\delta$ of our choice. 
%This completes the proof of Theorem~\ref{thm:bounds}.

In what follows, we mainly consider a single run of \SSh or \BSSh and let $s_\ell$ and $u_\ell$ denote the $\ell$th singular value and $\ell$th left singular vector of $A'$, respectively.

%%%%%%%%%%%%%%%%%%%%%%%%%%%%%%%%%%%%%%
\subsubsection{Error Analysis: \SSh} 

Here we show that with probability at least $1-\rho_\ell$ that $B'$ computed from $\SSh(A',\ell)$ satisfies the three properties discussed in Section \ref{related:fd} required for Lemma \ref{lem:desai}.  
\begin{itemize} \denselist
\item Property 1: For any unit vector $x\in\R^d$, $\|A'x\|^2 - \|B'x\|^2 \geq 0$,
\item Property 2 (strengthened): For any unit vector $x\in\R^d$, $\|A'x\|^2 - \|B'x\|^2 \leq \Delta/2 = (41/16) s_\ell^2$, 
\item Property 3: $\|A'\|_F^2 - \|B'\|_F^2 \geq \ell \alpha \Delta = \ell (3/4) s_\ell^2 $.
\end{itemize}

\label{subsub:SSh}
\begin{lemma} 
Property 1 holds deterministically for \SSh: $\|A' x\|^2 - \|B' x\|^2 \ge 0$ for all vectors $x\in\R^d$.
\end{lemma}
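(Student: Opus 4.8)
The plan is to unwind the definition of $B'$ produced by \SSh and show that $A'^TA' - B'^TB'$ is positive semidefinite, which is exactly equivalent to Property~1. Recall that inside \SSh we set $Z = \si(A',\ell,1/4)$, then $P = Z^TA'$, take $[H,\Lambda,V] = \svd(P,\ell)$, and finally $B' = \tilde\Lambda V^T$ with $\tilde\Lambda = \sqrt{\Lambda^2 - \lambda_\ell^2 I_\ell}$. Since $P = H\Lambda V^T$ (rank-$\ell$ SVD) and $Z$ has orthonormal columns, we have $Z Z^T A' = Z P = Z H \Lambda V^T$, so $A'^T Z Z^T A' = P^T P = V \Lambda^2 V^T$. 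Therefore $B'^T B' = V \tilde\Lambda^2 V^T = V(\Lambda^2 - \lambda_\ell^2 I_\ell)V^T = A'^T Z Z^T A' - \lambda_\ell^2 V V^T$.

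The key step is then to bound this from above by $A'^TA'$. First, $A'^T Z Z^T A' \preceq A'^T A'$ because $ZZ^T$ is an orthogonal projection, so $I - ZZ^T \succeq 0$ and hence $A'^T(I - ZZ^T)A' \succeq 0$. Subtracting the further nonnegative term $\lambda_\ell^2 VV^T \succeq 0$ only makes $B'^TB'$ smaller, so
\[
A'^TA' - B'^TB' = A'^T(I - ZZ^T)A' + \lambda_\ell^2 VV^T \succeq 0 .
\]
This is the matrix form of Property~1: for any $x\in\R^d$, $\|A'x\|^2 - \|B'x\|^2 = x^T(A'^TA' - B'^TB')x \ge 0$.

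I expect no real obstacle here — the statement says Property~1 holds \emph{deterministically}, and indeed the argument never uses any convergence guarantee of \si; it only uses that $Z$ has orthonormal columns and that $\lambda_\ell^2 I_\ell - \Lambda^2 \preceq 0$ (so the square root $\tilde\Lambda$ is well-defined and real, which holds since $\lambda_\ell$ is the smallest of the $\ell$ singular values in $\Lambda$). The one point to state carefully is the identity $A'^T Z Z^T A' = P^TP = V\Lambda^2 V^T$, which follows from $P = Z^TA'$ together with the rank-$\ell$ SVD $P = H\Lambda V^T$ and $H^TH = I_\ell$; once that is in place, the PSD decomposition above is immediate and the lemma follows.
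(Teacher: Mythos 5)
Your proposal is correct and is essentially the paper's own argument in matrix form: the decomposition $A'^TA' - B'^TB' = A'^T(I-ZZ^T)A' + \lambda_\ell^2 VV^T$ is exactly the paper's split of $\|A'x\|^2 - \|B'x\|^2$ into $(\|A'x\|^2 - \|Px\|^2) + (\|Px\|^2 - \|B'x\|^2)$, with the two summands shown nonnegative for the same reasons ($I-ZZ^T$ is an orthogonal projection, and $\lambda_\ell$ is the smallest entry of $\Lambda$). No gaps.
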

\begin{proof}
Let $P = Z^TA'$ be as defined in \SSh. Consider an arbitrary unit vector $x \in \R^d$, and let $y = A'x$.
\begin{align*}
\|A' x\|^2 - \|P x\|^2 &= \|A'x\|^2 - \|Z^T A'x\|^2 = \|y\|^2 - \|Z^Ty\|^2 \\
&= \|(I-ZZ^T)y\|^2 \geq 0
\end{align*}
and 
\[
\|P x\|^2 - \|B' x\|^2 = \lambda_\ell^2 \sum_{i=1}^\ell \langle x,v_i\rangle^2 \geq 0,
\]
therefore $\|A' x\|^2 - \|B' x\|^2 = (\|A' x\|^2 - \|P x\|^2) + (\|P x\|^2 - \|B' x\|^2) \geq 0$.
\end{proof}

\begin{lemma}
With probability at least $1-\rho_\ell$, Property 2 holds for \SSh: for any unit vector $x\in\R^d$, $\|A'x\|^2 - \|B'x\|^2 \leq 41/16 \; s_\ell^2$. \label{fact1}
\end{lemma}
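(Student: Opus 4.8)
The plan is to decompose the gap $\|A'x\|^2 - \|B'x\|^2$ exactly as in the proof of Property 1, namely
\[
\|A'x\|^2 - \|B'x\|^2 = \big(\|A'x\|^2 - \|Px\|^2\big) + \big(\|Px\|^2 - \|B'x\|^2\big),
\]
where $P = Z^T A'$ and $B' = \tilde\Lambda V^T$ with $\tilde\Lambda^2 = \Lambda^2 - \lambda_\ell^2 I_\ell$. The second term is easy: for a unit vector $x$ it equals $\lambda_\ell^2 \sum_{i=1}^\ell \langle x, v_i\rangle^2 \le \lambda_\ell^2$, and since $\lambda_\ell$ is the $\ell$th singular value of $P = Z^T A'$, which has rank at most $\ell$, we have $\lambda_\ell \le s_\ell$ (projecting onto $Z$ cannot increase singular values), so this term is at most $s_\ell^2$. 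The first term $\|A'x\|^2 - \|Px\|^2 = \|(I-ZZ^T)A'x\|^2$ is exactly the error incurred by projecting the rows of $A'$ onto the span of the columns of $Z$; bounding its operator norm over all unit $x$ is the same as bounding $\|A' - A'Z Z^T\|_2^2$... wait, more precisely $\|(I-ZZ^T)A'\|_2^2$, the spectral norm of the residual of \si applied to $A'$ with target rank $\ell$ and error parameter $1/4$.

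The key step is therefore to invoke the spectral-norm guarantee of \si (the second bulleted guarantee from \cite{musco2015stronger} in Section~2.2), which gives, with high probability, $\|A' - ZZ^T A'\|_2 \le (1+\eps)\,\sigma_{\ell+1}(A')$ with $\eps = 1/4$ — but here the subtlety is that we are running \si with target rank equal to $\ell$, the sketch size, so the relevant tail singular value is $s_{\ell+1} \le s_\ell$. Actually I expect the argument wants target rank slightly below $\ell$, or to use that \si with rank $\ell$ gives residual bounded by $(1+1/4) s_{\ell} $ after reindexing; one must be careful about whether the guarantee is stated for rank $k$ with residual $\sigma_{k+1}$ (so one should think of $k=\ell$ and residual $\le (5/4) s_{\ell+1} \le (5/4) s_\ell$). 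Squaring gives $\|(I-ZZ^T)A'x\|^2 \le (25/16) s_\ell^2$ for every unit $x$. Adding the two contributions yields $\|A'x\|^2 - \|B'x\|^2 \le (25/16 + 1) s_\ell^2 = (41/16) s_\ell^2 = \Delta/2$, which is exactly the claimed strengthened Property 2. The probability bound $1-\rho_\ell$ with $\rho_\ell = O(1/\sqrt\ell)$ comes directly from the failure probability of \si as quoted at the start of Section~\ref{sec:suc-prob}.

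The main obstacle I anticipate is matching the constants precisely: getting the $25/16$ out of \si requires pinning down which error guarantee is used ($(1+\eps)\sigma_{k+1}$ versus a $\sigma_k$-type bound), confirming the relationship between the target rank fed to \si and the index $\ell$ of the singular value appearing in the final bound, and checking that the failure probability of the \cite{musco2015stronger} guarantee really is $O(1/\sqrt\ell)$ rather than something needing boosting (which is, after all, the whole point of \BSSh). A secondary subtlety is that the bound on $\lambda_\ell$ uses $\lambda_\ell \le s_\ell$; one should double-check that $Z$ having orthonormal columns makes $P = Z^T A'$ a matrix whose singular values are dominated by those of $A'$, which follows from $\|Z^T A' x\| \le \|A' x\|$ for all $x$ combined with the Courant–Fischer characterization. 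Once these constant-chasing details are settled, the decomposition plus the \si guarantee gives the lemma immediately.
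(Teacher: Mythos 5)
Your proposal is correct and follows essentially the same route as the paper: the same two-term decomposition through $P = Z^TA'$, the bound $\|Px\|^2 - \|B'x\|^2 \le \lambda_\ell^2 \le s_\ell^2$ via Courant--Fischer, and the spectral-norm guarantee of \si with $\eps = 1/4$ giving $\|(I-ZZ^T)A'\|_2^2 \le (25/16)\,s_{\ell+1}^2 \le (25/16)\,s_\ell^2$, summing to $(41/16)\,s_\ell^2$. The constant-matching details you flagged resolve exactly as you guessed (target rank $\ell$, residual indexed by $\ell+1$, failure probability $\rho_\ell = O(1/\sqrt{\ell})$ inherited directly from \si).
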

\begin{proof}
Consider an arbitrary unit vector $x \in \R^d$, and note that
\[
\|A' x\|^2 - \|B' x\|^2 = \left( \|A' x\|^2 - \|P x\|^2  \right) + \left(\|P x\|^2 - \|B' x\|^2 \right).
\]
We bound each term individually.  The first term is bounded as
\begin{align}
\|A' x\|^2 - \|P x\|^2 &= x^T (A'^TA' - P^TP) x &\\
&\leq \|A'^TA' - P^TP\|_2  &\\
&= \|A'^TA' - A'^TZ Z^TA'\|_2 &\\
&= \|A'^T ( I-ZZ^T) A'\|_2 &\\
&= \|A'^T (I-ZZ^T)^T ( I-ZZ^T) A'\|_2 \\
&= \|( I-ZZ^T) A'\|_2^2 &\\
&\leq 25/16\; s_{\ell+1}^2 \leq 25/16\; s_\ell^2.  
\end{align} 
Where transition 5 is true because $(I-ZZ^T)$ is a projection. Transition 7 also holds by the spectral norm error bound of \cite{musco2015stronger} for $\eps=1/4$.
To bound the second term, note that $\|Px\| = \|Z^TA' x\| = \|\Lambda V^T x\|$, since $[H,\Lambda, V] = \svd(P,\ell)$ as defined in \SSh. 
\begin{align*}
&\|P x\|^2 - \|B' x\|^2 = \sum_{i=1}^\ell \lambda_i^2 \langle x,v_i\rangle^2 - \sum_{i=1}^\ell \tilde \lambda_i^2 \langle x,v_i\rangle^2=
\sum_{i=1}^\ell ( \lambda_i^2 - \tilde \lambda_i^2) \langle x,v_i\rangle^2 = \sum_{i=1}^\ell \lambda_\ell^2 \langle x,v_i\rangle^2 \leq \lambda_\ell^2 \leq s_\ell^2,
\end{align*}
where last inequality follows by the Courant-Fischer min-max principle, i.e. as $\lambda_\ell$ is the $\ell$th singular value of the projection of $A'$ onto $Z$, then $\lambda_\ell \leq s_\ell$. Summing the two terms yields $\|A'x\|^2 - \|B'x\|^2 \leq 41/16\; s_\ell^2$.
\end{proof}

The original bound $\|A'^TA' - B'^TB'\|_2 \leq \Delta = 41/8\; s_\ell^2$ discussed in Section~\ref{sec:suc-prob} is also immediately satisfied.

\begin{lemma}
With probability at least $1-\rho_\ell$, Property 3 holds for \SSh: $\|A'\|_F^2 - \|B'\|_F^2 \geq  \ell (3/4) s_\ell^2$.  
\label{lem:prop3}
\end{lemma}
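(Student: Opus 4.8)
The plan is to reduce the whole statement to a single lower bound on $\lambda_\ell^2$, the squared $\ell$th singular value of $P=Z^TA'$. First I would record exactly how much Frobenius mass the shrink step consumes. Since $P\in\R^{\ell\times d}$ has $\rank(P)\le\ell$, its rank-$\ell$ decomposition $[H,\Lambda,V]=\svd(P,\ell)$ captures all of $P$, so $\|P\|_F^2=\sum_{i=1}^\ell\lambda_i^2$; and $B'=\tilde\Lambda V^T$ with $\tilde\lambda_i^2=\lambda_i^2-\lambda_\ell^2$ gives $\|B'\|_F^2=\|P\|_F^2-\ell\lambda_\ell^2$. Because $Z$ has orthonormal columns, $\|P\|_F^2=\|Z^TA'\|_F^2=\|ZZ^TA'\|_F^2$, and the Pythagorean identity $\|A'\|_F^2=\|ZZ^TA'\|_F^2+\|(I-ZZ^T)A'\|_F^2$ then yields
\[
\|A'\|_F^2-\|B'\|_F^2=\|(I-ZZ^T)A'\|_F^2+\ell\,\lambda_\ell^2\ \ge\ \ell\,\lambda_\ell^2 .
\]

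It therefore suffices to show $\lambda_\ell^2\ge (3/4)\,s_\ell^2$ with probability at least $1-\rho_\ell$; plugging this in gives $\|A'\|_F^2-\|B'\|_F^2\ge (3/4)\,\ell\,s_\ell^2=\ell\,\alpha\,\Delta$, which is Property 3. This is the point at which I would invoke the per-vector guarantee of \si (from \cite{musco2015stronger}) with $\eps=1/4$ and $k=\ell$: on the (probability $\ge 1-\rho_\ell$) event that the random start $G$ aligns with the top-$\ell$ left singular space of $A'$, the $\ell$th singular value $\lambda_\ell$ produced by the $\svd(P,\ell)$ step of \SSh satisfies $|\,s_\ell^2-\lambda_\ell^2\,|\le (1/4)\,\sigma_{\ell+1}(A')^2$ (here $u_\ell^TA'A'^Tu_\ell=s_\ell^2$). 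Since singular values are non-increasing, $\sigma_{\ell+1}(A')\le s_\ell$, hence $\lambda_\ell^2\ge s_\ell^2-(1/4)s_\ell^2=(3/4)s_\ell^2$, as needed. No new probabilistic argument is required: the favorable event for $G$ is the same one already used for Properties 1 and 2.

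The one delicate point — and the step I expect to need the most care — is matching the per-vector bound of \cite{musco2015stronger} to the quantity $\lambda_\ell$ in \SSh. As quoted, that bound controls $z_i^TA'A'^Tz_i$, the diagonal entries of the Gram matrix of the iterate, whereas $\lambda_\ell^2$ is the \emph{smallest eigenvalue} of that Gram matrix, and a large diagonal alone does not control the smallest eigenvalue. The resolution is to observe that the line $[H,\Lambda,V]=\svd(P,\ell)$ inside \SSh is exactly the Rayleigh--Ritz post-processing built into the block-power algorithm of \cite{musco2015stronger}: the approximate left singular vectors are the columns of $ZH$, and $(ZH)_i^TA'A'^T(ZH)_i=\lambda_i^2$, so the per-vector guarantee applies verbatim to the $\lambda_i$. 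Alternatively one can bypass the issue entirely by writing $\lambda_\ell=\min_{w\in\range(Z),\,\|w\|=1}\|A'^Tw\|$ and deriving the same lower bound directly from the subspace guarantees of \si on $\range(Z)$. Everything else in the argument is bookkeeping.
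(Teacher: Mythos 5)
Your proof is correct and follows essentially the same route as the paper: split $\|A'\|_F^2-\|B'\|_F^2$ into $\|(I-ZZ^T)A'\|_F^2\ge 0$ plus $\|P\|_F^2-\|B'\|_F^2=\ell\lambda_\ell^2$, then lower-bound $\lambda_\ell^2\ge(3/4)s_\ell^2$ via the per-vector guarantee of \cite{musco2015stronger} with $\eps=1/4$. The one difference is to your credit: you explicitly flag and resolve (via the Rayleigh--Ritz identification $(ZH)_\ell^TA'A'^T(ZH)_\ell=\lambda_\ell^2$) the gap between the per-vector bound on $z_\ell^TA'A'^Tz_\ell$ and the Ritz value $\lambda_\ell^2$, which the paper silently equates.
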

\begin{proof}\[
\|A'\|_F^2 - \|P\|_F^2 = \|A'\|_F^2 - \|Z^TA'\|_F^2 = \|A'-ZZ^TA'\|_F^2 \geq 0
\]
In addition, 
\[
\|P\|_F^2 - \|B'\|_F^2 = \ell \lambda_\ell^2 \geq \ell (3 / 4) s_\ell^2 \ .
\]
The last inequality holds by the per vector error bound of~\cite{musco2015stronger} for $i=\ell$ and $\eps = 1/4$, i.e. 
$|u_\ell^TA'A'^T u_\ell - z_\ell^TA'A'^T z_\ell | = |s_\ell^2 - \lambda_\ell^2| \leq 1/4 s_{\ell+1}^2 \leq 1/4 s_{\ell}^2$, which means $\lambda_\ell^2 \geq 3/4\; s_\ell^2$.
Therefore 
\[
\|A'\|_F^2 - \|B'\|_F^2 = (\|A'\|_F^2 - \|P\|_F^2) + (\|P\|_F^2 - \|B'\|_F^2) \geq \ell (3/4) s_{\ell}^2.  %\qedhere
\]
\end{proof}

%%%%%%%%%%%%%%%%%%%%%%%%%%%%%%%%%%%%%%
\subsubsection{Error Analysis: \BSSh and \sfd}
We now consider the \BSSh algorithm, and the looser version of Property 2 (the original version) as
\begin{itemize} \denselist
%\item Property 1: For any unit vector $x\in\R^d$, $\|A'x\|^2 - \|B'x\|^2 \geq 0$,
\item Property 2: For any unit vector $x\in\R^d$, $\|A'x\|^2 - \|B'x\|^2 \leq \Delta = (41/8) s_\ell^2$.  
%\item Property 3: $\|A'\|_F^2 - \|B'\|_F^2 \geq \ell (3/4) s_\ell^2$,
\end{itemize}
%$\Delta = (\|A'\|_F^2 - \|B'\|_F^2)/(12/41 \ell)$
By invoking $\vs((A'^TA' - B'^TB')/(\Delta/2), \delta)$, then \vs always returns \textsc{True} if 
$\|A'^TA' - B'^TB'\|_2 \leq \Delta/2$ (as is true of the input with probability at least $1-\rho_\ell$ by Lemma \ref{fact1}), and 
\vs catches a failure event where $\|A'^TA' - B'^TB'\|_2 \geq \Delta$ with probability at least $1-\delta_i$ by Lemma \ref{lem:vs}.  
As discussed in Section~\ref{sec:suc-prob} all invocations of \vs succeed with probability at most $1-\delta$, hence all runs of \BSSh succeed and satisfy Property 2 (as well as Properties 1 and 3) with $\alpha = 6/41$ and $\Delta = 41/8\; s_\ell^2$, and with probability at least $1-\delta$.
Finally, we can invoke the mergeability property of $\fd$~\cite{Lib12} and Lemma \ref{lem:desai} to obtain the error bounds in our main result, Theorem \ref{thm:bounds}.

%%However, since we do not know the value $s_\ell^2$ to compute $\Delta = (41/16) s_\ell^2$, we exploit the relationship between right hand sides of P2 and P3 to compute it as $\Delta = (\|A'\|_F^2 - \|B'\|_F^2)/((12/41) \ell)$. 
%
%Using Lemma~\ref{lem:desai}, we conclude \BSSh obtains the following error bounds for $\alpha = \frac{(3/4) s_\ell^2}{(41/8) s_\ell^2} = 6/41$.
%\[
%0 \leq \|A^T A - B^T B\|_2 \leq \frac{1}{\alpha \ell-k} \|A -A_k\|_F^2, 
%\]
%and
%\[
%\|A - \pi_{B}^k(A)\|_F^2 \leq \frac{\ell}{ \ell -k/\alpha} \|A - A_k\|_F^2, 
%\]
%
%%%%%%%%%%%%%%%%%%%%%%%%%%%%%%%%%%%%%%%
%\subsubsection{Error Analysis: \sfd}
%To see that error bounds of Theorem \ref{thm:bounds} hold, simply note that due to the mergeability property of $\fd$~\cite{Lib12}, discussed in Section~\ref{related:fd}, merging the resulting sketches (matrices $B'$) of \BSSh using \DSh does not incur any additional error. That means \sfd satisfies the same error bounds as of \BSSh.

%%%%%%%%%%%%%%%%%%%%%%%%%%%%%%%%%%%%%%%%%%%%%%%%%%%%%%%%%%%%%%%%%%%%%%
%%%%%%%%%%%%%%%%%%%%%%%%%%%%%%%%%%%%%%%%%%%%%%%%%%%%%%%%%%%%%%%%%%%%%%
%%%%%%%%%%%%%%%%%%%%%%%%%%%%%%%%%%%%%%%%%%%%%%%%%%%%%%%%%%%%%%%%%%%%%%
\section{Experiments}
\label{sec:exp}
In this section we empirically validate that \sfd matches (and often improves upon) the accuracy of \fd, while running significantly faster on sparse real and synthetic datasets.  

We do not implement \sfd exactly as described above.  Instead we directly call \SSh in Algorithm \ref{alg:SparseFD} in place of \BSSh.  The randomized error analysis of \si indicates that we may occasionally miss a subspace within a call of \si and hence \SSh; but in practice this is not a catastrophic event, and as we will observe, does not prevent \sfd from obtaining small empirical error.  

The empirical comparison of \fd to other matrix sketching techniques is now well-trodden~\cite{ghashami2015frequent,desai2015improved}.  \fd (and, as we observe, by association \sfd) has much smaller error than other sketching techniques which operate in a stream.  However, \fd is somewhat slower by a factor of the sketch size $\ell$ up to some leading coefficients.  
We do not repeat these comparison experiments here.  

\Paragraph{Setup} 
We ran all the algorithms under a common implementation framework to test their relative performance as accurately as possible. We ran the experiments on an Intel(R) Core(TM) 2.60 GHz CPU with 64GB of RAM running Ubuntu 14.04.3. 
All algorithms were coded in C, and compiled using gcc 4.8.4.
All linear algebra operation on dense matrices (such as SVD) invoked those implemented in LAPACK. 

\Paragraph{Datasets} 
We compare the performance of the two algorithms on both synthetic and real datasets. Each dataset is an $n\times d$ matrix $A$ containing $n$ datapoints in $d$ dimensions. 

The real dataset is part of the $20$ Newsgroups dataset~\cite{Lang95}, that is a collection of approximately $20{,}000$ documents, partitioned across $20$ different newsgroups. However we use the `by date' version of the data, where features (columns) are tokens and rows correspond to documents. This data matrix is a zero-one matrix with $11{,}314$ rows and $117{,}759$ columns. In our experiment, we use the transpose of the data and picked the first $d = 3000$ columns, hence the subset matrix has $n = 117{,}759$ rows and $d = 3000$ columns; roughly $0.15\%$ of the subset matrix is non-zeros.
%{\color{blue} \textsc{Jeff:} need to say which columns.  first 5300?}
%\mina{report its relative sparsity}

The synthetic data generates $n$ rows i.i.d.  
Each row receives exactly $z \ll d$ non-zeros (with default $z=100$ and $d=1000$), with the remaining entries as $0$.  
The non-zeros are chosen as either $1$ or $-1$ at random.  
Each non-zero location is chosen without duplicates among the columns. 
The first $1.5z$ columns (e.g., 150), the ``head'', have a higher probability of receiving a non-zero than the last $d-1.5z$ columns, the ``tail''.   The process to place a non-zero first chooses the head with probability $0.9$ or the tail with probability $0.1$.  For whichever set of columns it chooses (head or tail), it places the non-zero uniformly at random among those columns.  
% where it first decides $90\%$ of the time to choose one of the first $1.5z$ columnsMore precisely the first $1.5z$ columns 
%The first $1.5z$ columns of each row have $90\%$ probability of receiving a $1$ while the remaining $d-1.5z$ columns receive a non-zero with only $10\%$ probability; so the matrix has a significant decay in singular values after the first $1.5z$.    
%{\color{blue} \textsc{Jeff:} please verify}

\Paragraph{Measurements} 
Each algorithm outputs a sketch matrix $B$ of $\ell$ rows.  For each of our experiments, we measure the efficiency of algorithms against one parameter and keep others fixed at a default value.
Table \ref{tbl:param} lists all parameters along with their default value and the range they vary in for synthetic dataset.
We measure the accuracy of the algorithms with respect to:
\begin{itemize}
\item Projection Error:  
\textsf{proj-err} $= \|A - \pi_{B_k}(A)\|_F^2 / \|A-A_k\|_F^2$,
\item Covariance Error: 
\textsf{cov-err} $= \|A^T A-B^T B\|_2 / \|A\|_F^2$,
\item Runtime in seconds. 
\end{itemize}
In all experiments, we have set $k = 10$. Note that \textsf{proj-err} is always larger than $1$, and for \fd and \sfd the \textsf{cov-err} is always smaller than $1/(\frac{6}{41} \ell-k)$ due to our error guarantees.

\begin{table*}[t!!!!]
\caption{Parameter values} \label{tbl:param}
\begin{center}
\begin{small}
\begin{sc}
\begin{tabular}{|c||c|c|}
\hline
 & \textbf{default} & \textbf{range} \\
\hline
\textbf{ datapoints }$(n)$ & $10000$ & $[10^4 - 6\times 10^4]$ \\
\hline
\textbf{dimension }$(d)$ & $1000$ & $[10^3 - 6\times 10^3]$ \\
\hline
\textbf{sketch size }$(\ell)$ & $50$ & $[5-100]$ \\
\hline
\textbf{nnz per row } & $100$ & $[5 - 500]$\\
\hline
\end{tabular}
\end{sc}
\end{small} 
\end{center}
\end{table*}

%\begin{centering}
\begin{table*}
\begin{tabular}{|c|c|c|c|c|} \hline
\rotatebox{90}{\hspace{1mm}  \small \textsf{Projection Error }} &
\includegraphics[width=\figsize]{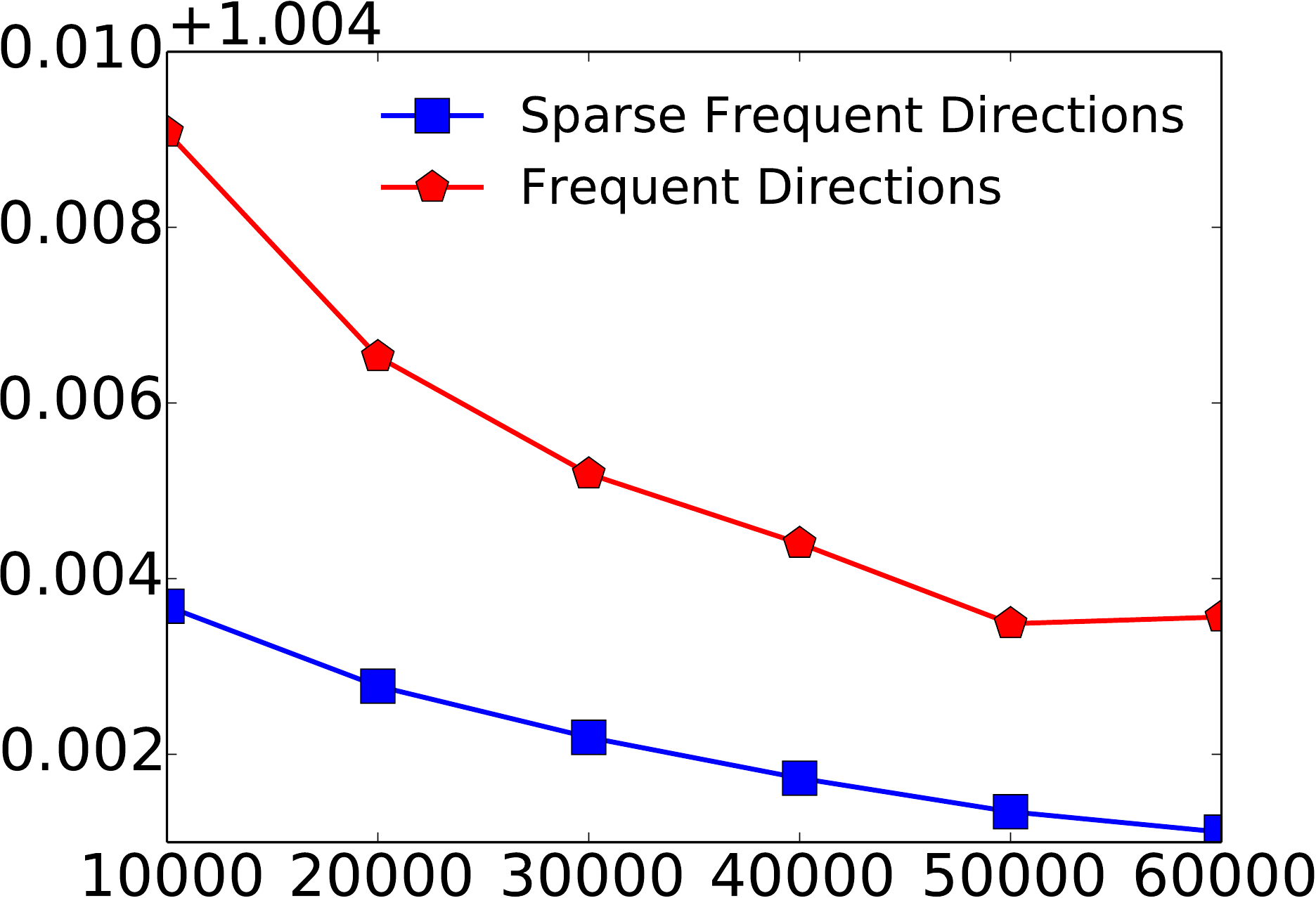} &
\includegraphics[width=\figsize]{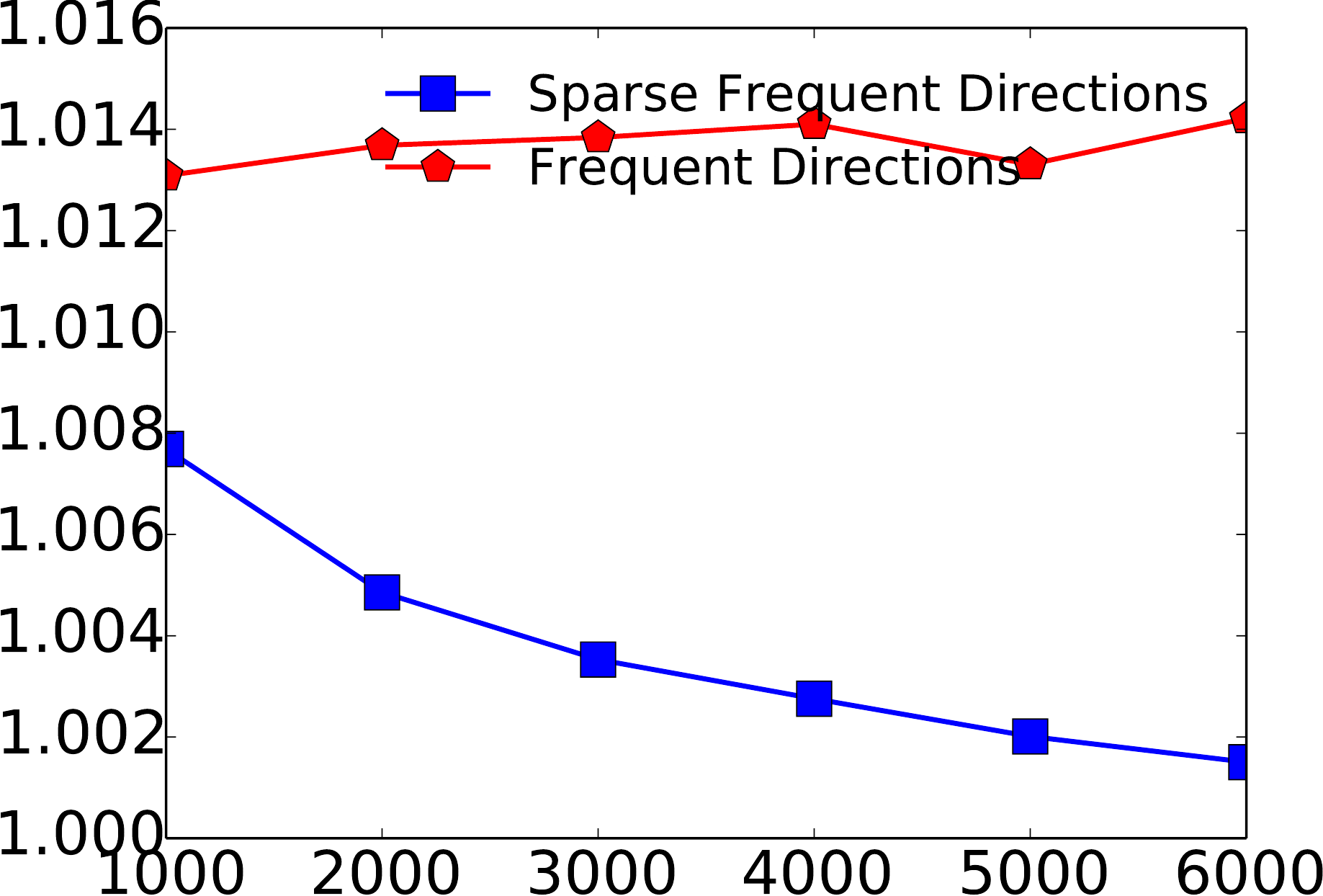} &
\includegraphics[width=\figsize]{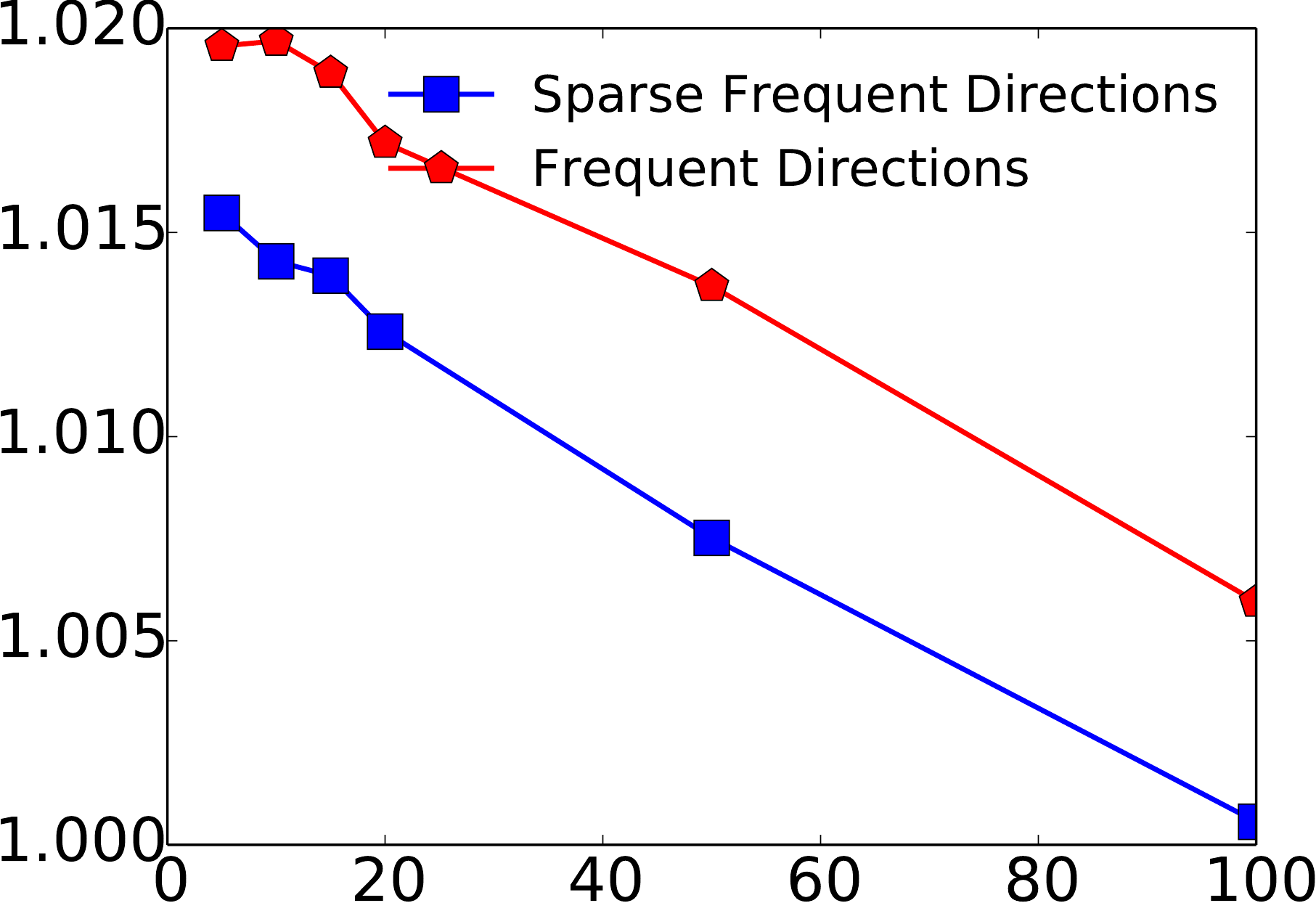} &
\includegraphics[width=\figsize]{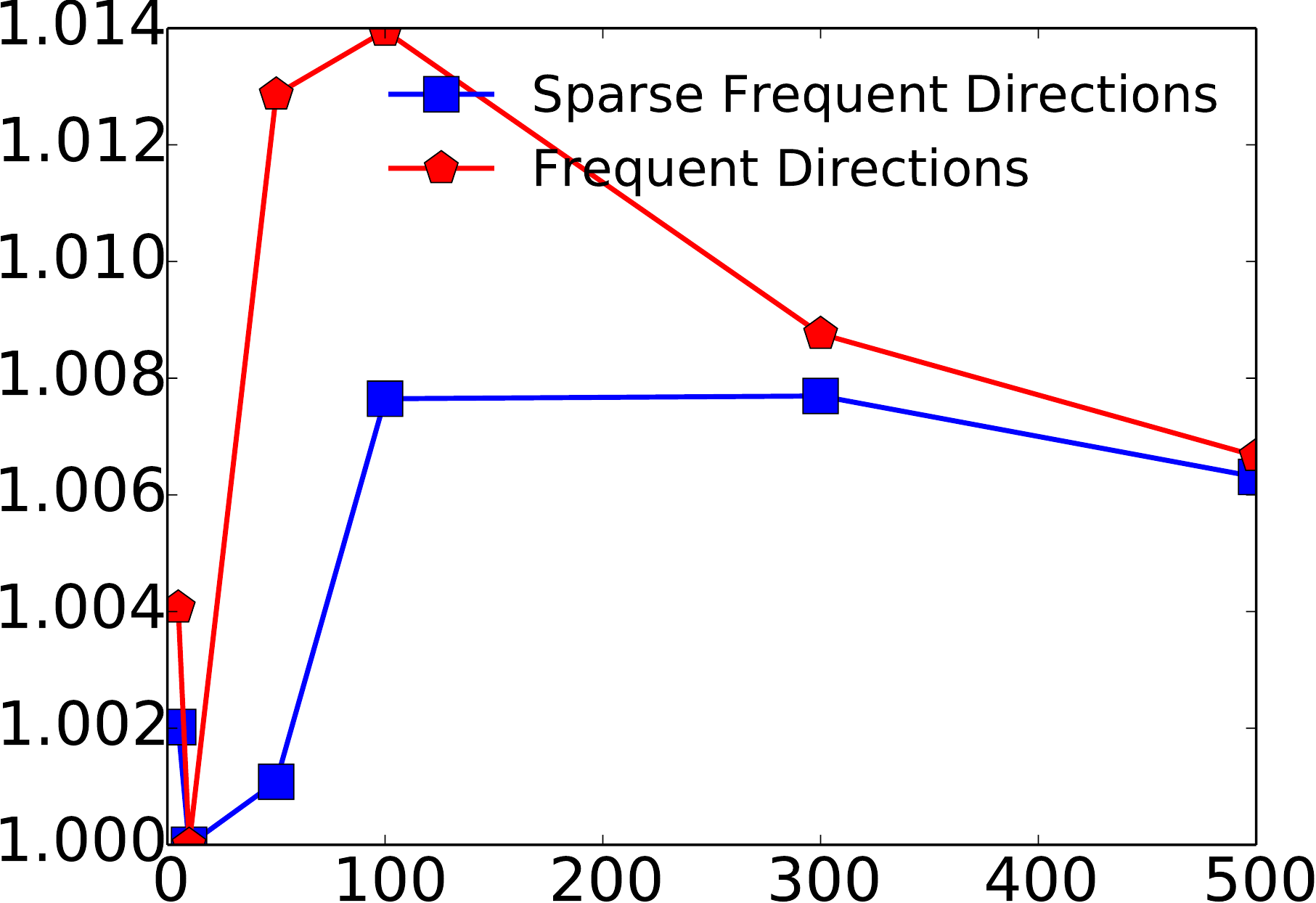} \\ \hline 
\rotatebox{90}{\hspace{1mm}  \small \textsf{Covariance Error }} &
\includegraphics[width=\figsize]{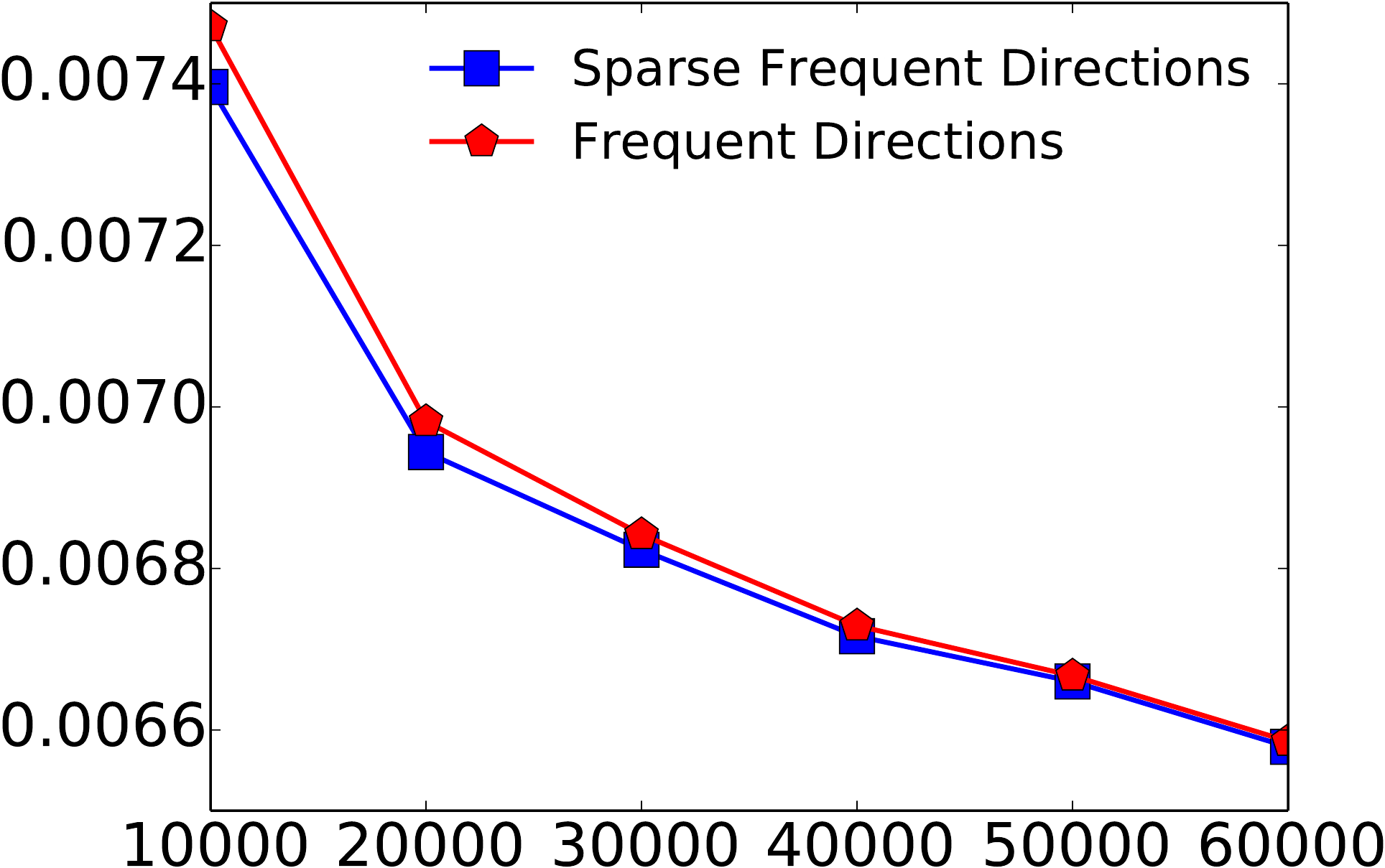} &
\includegraphics[width=\figsize]{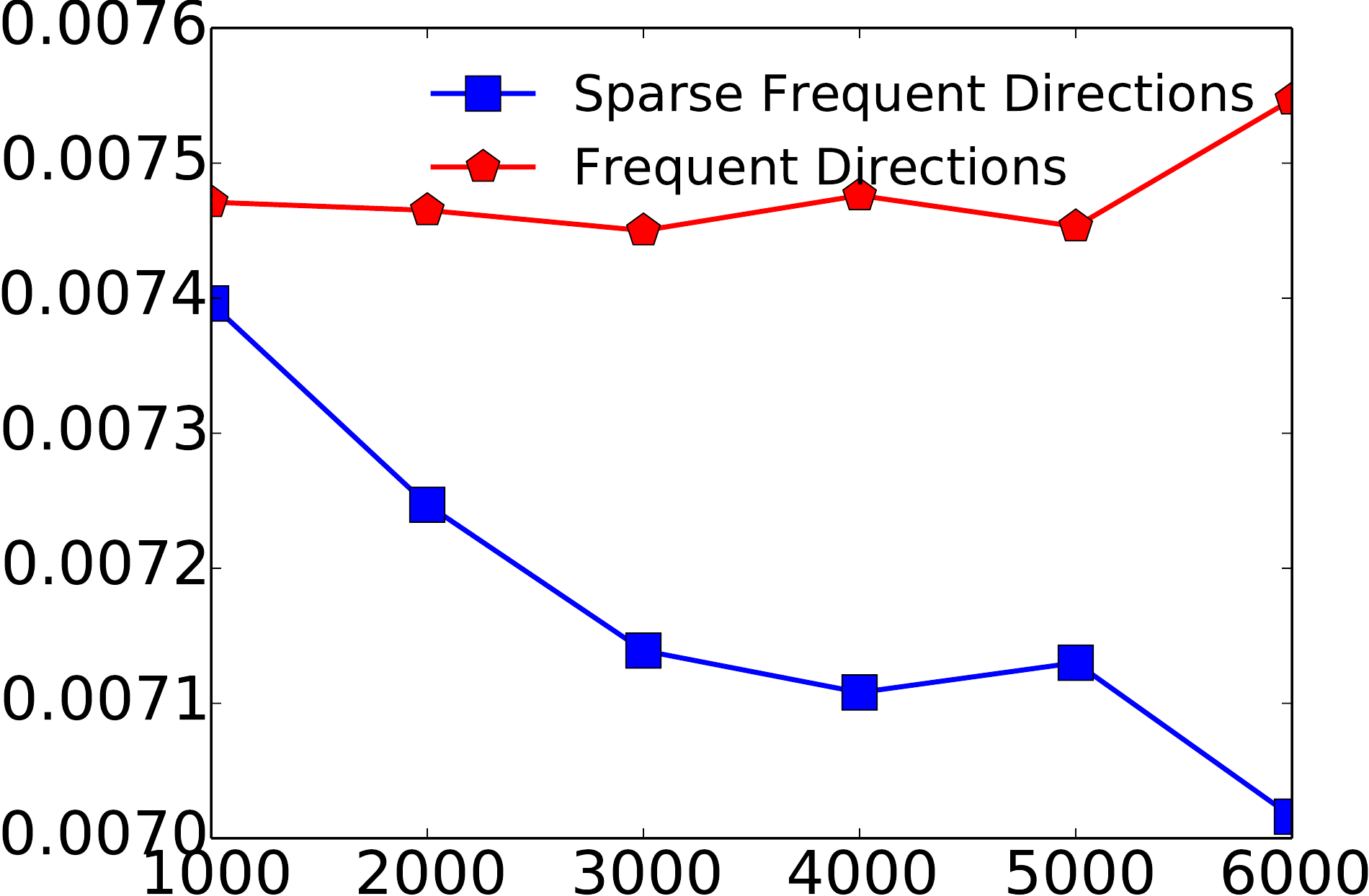} &
\includegraphics[width=\figsize]{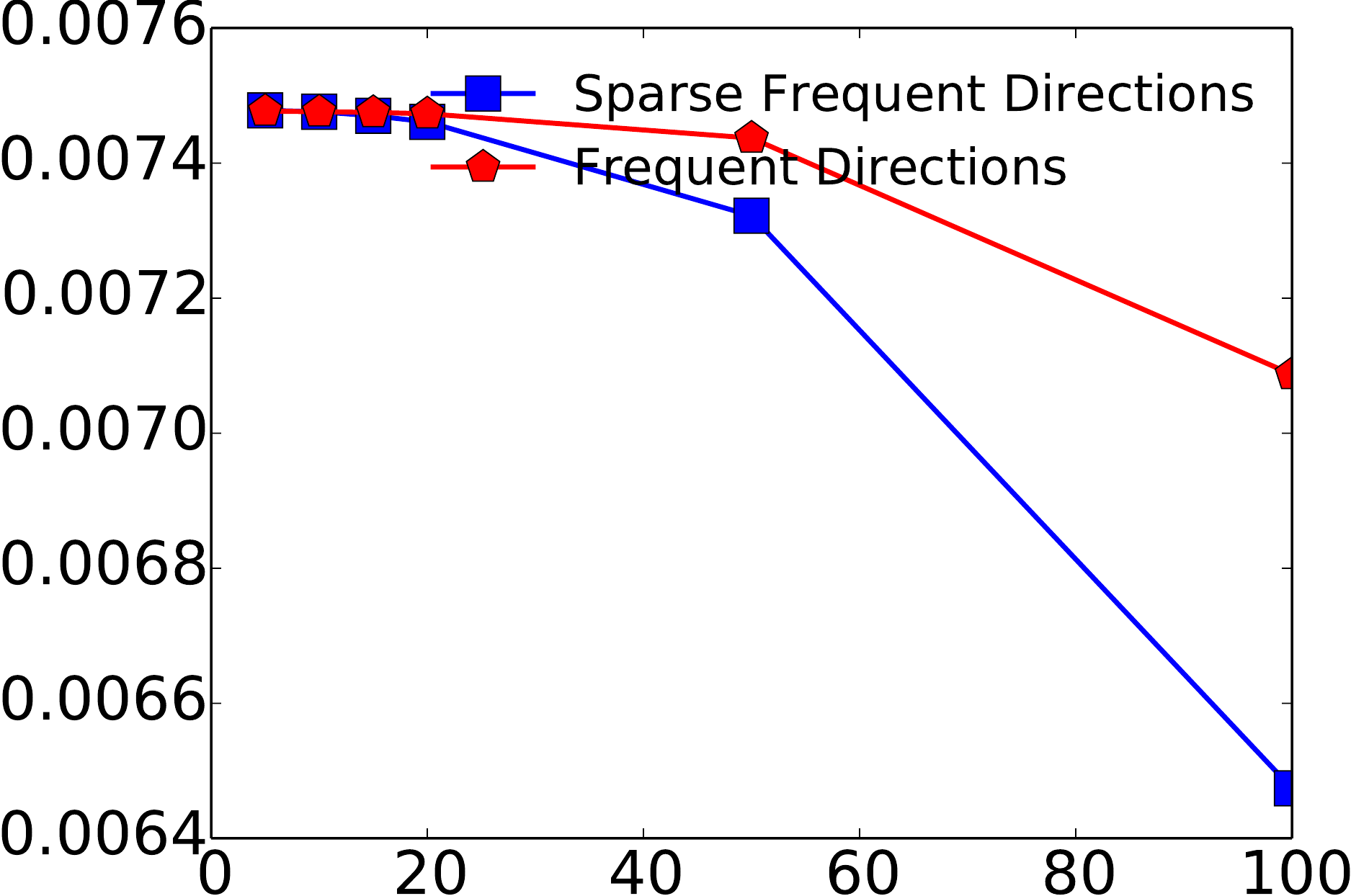} &
\includegraphics[width=\figsize]{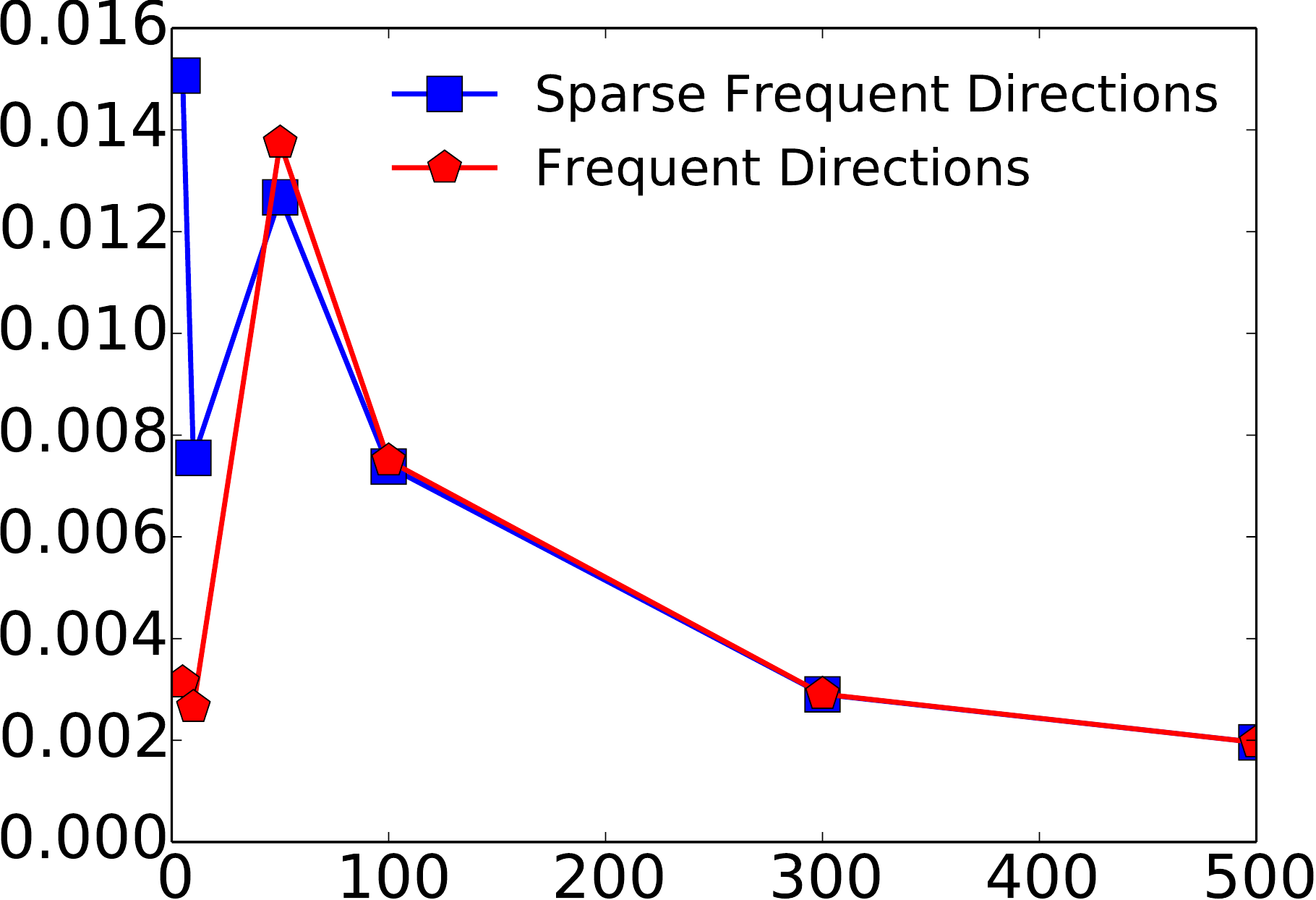} \\ \hline
\rotatebox{90}{\hspace{5mm} \small \textsf{Run Time }} &
\includegraphics[width=\figsize]{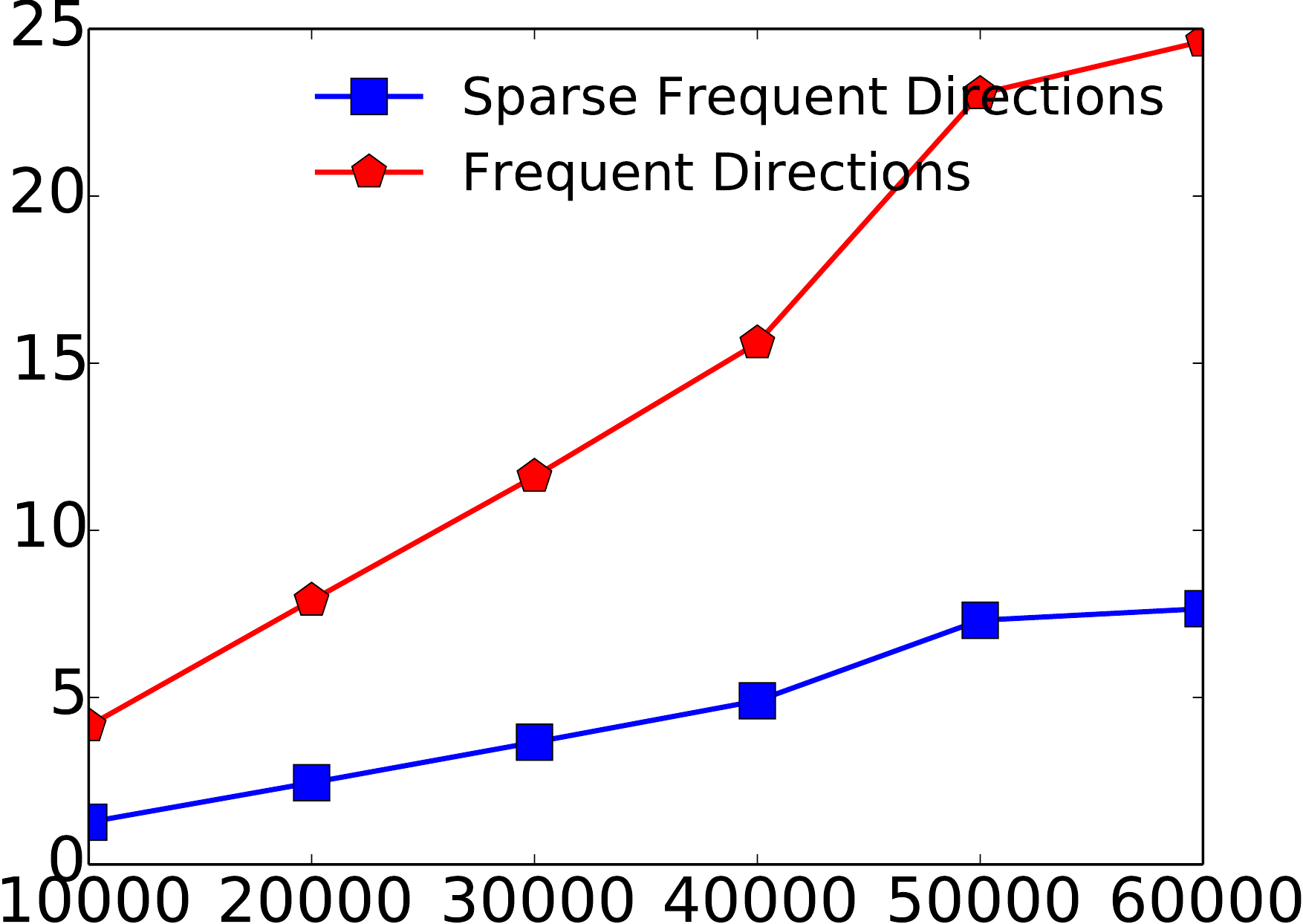} &
\includegraphics[width=\figsize]{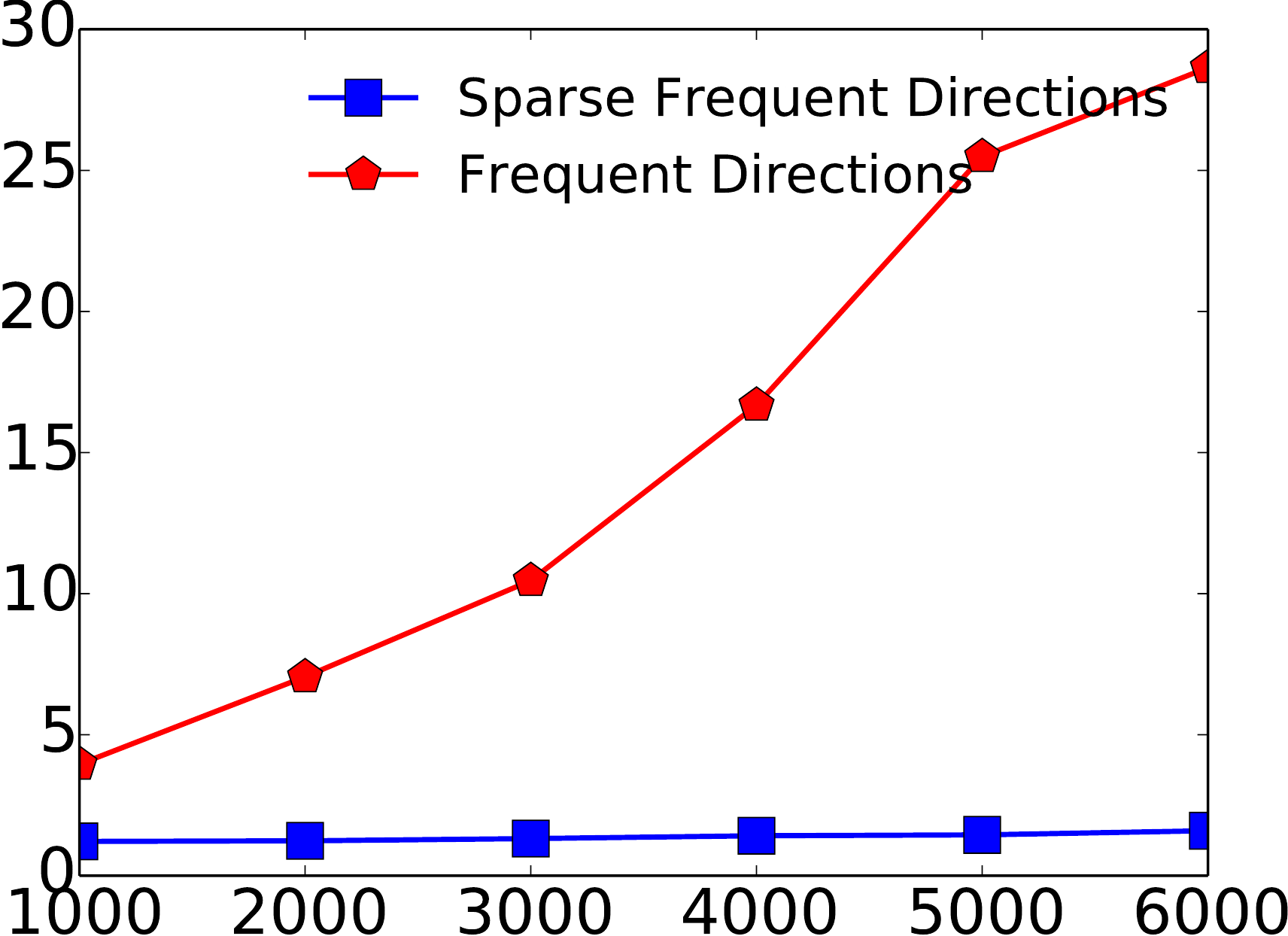} &
\includegraphics[width=\figsize]{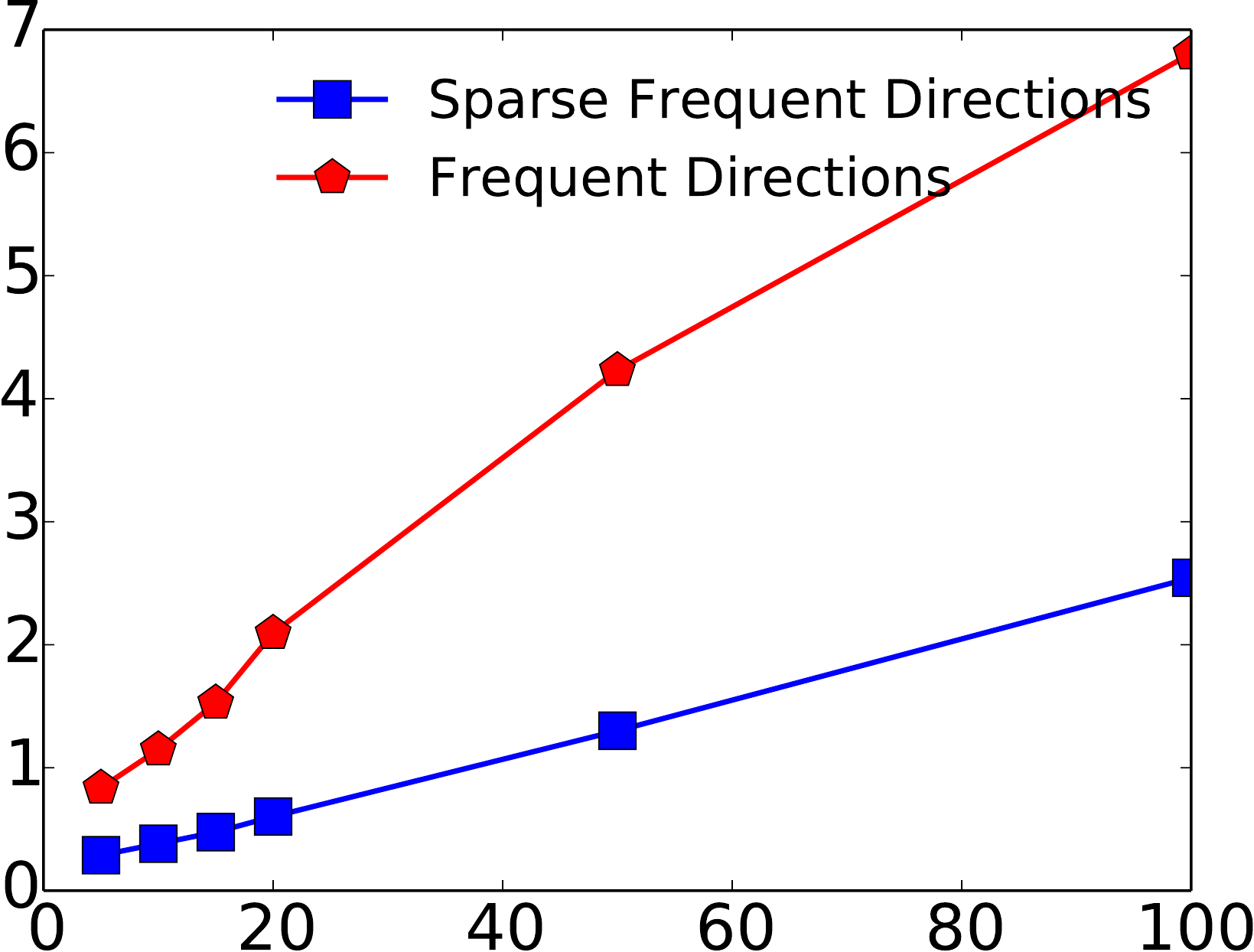} &
\includegraphics[width=0.93\figsize]{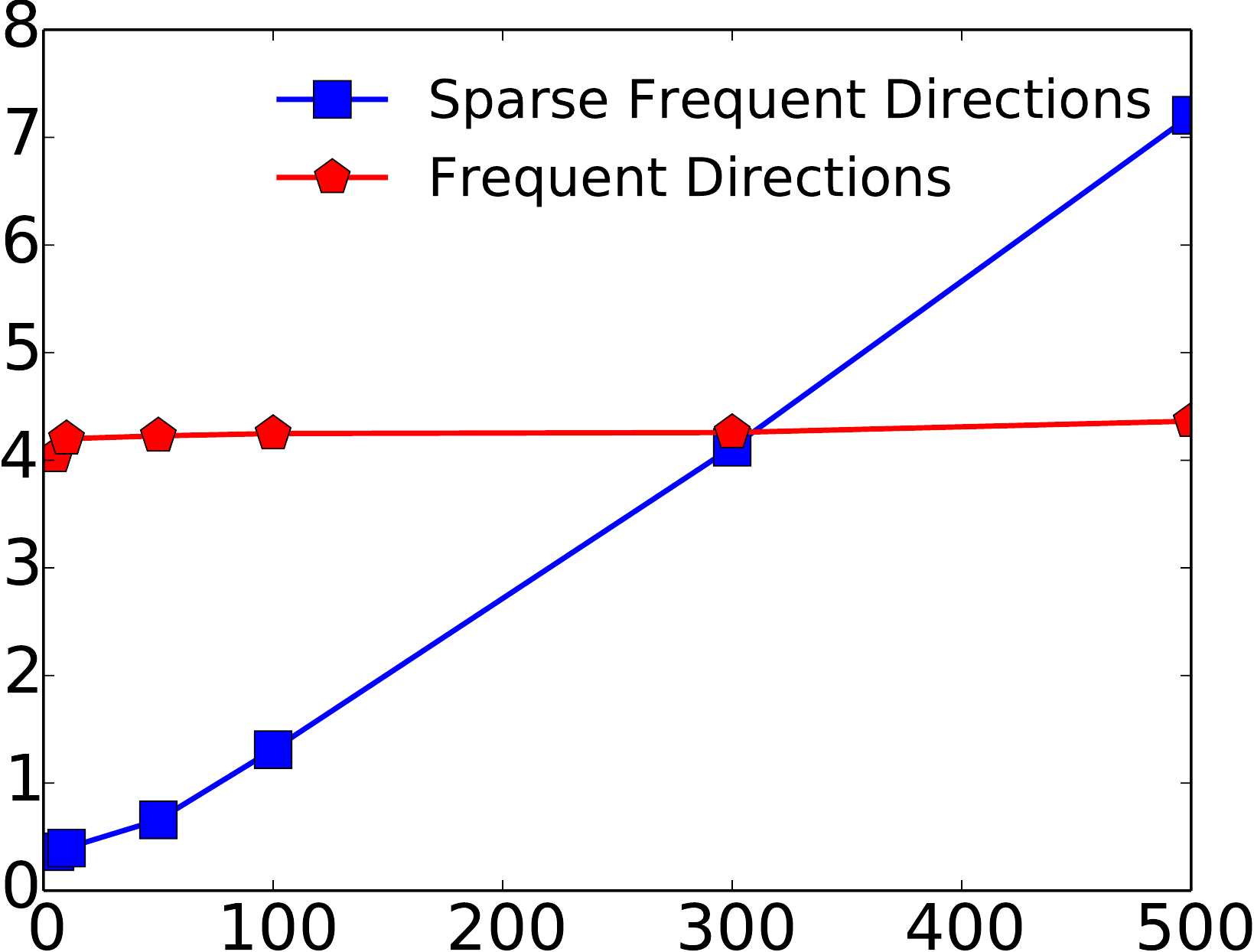} \\ \hline
%
%\vspace{1mm}
%\end{centering}
%\vspace{1mm}
\; &
\textsf{number of data points} &
\textsf{dimension} &
\textsf{sketch size} &
\textsf{nnz per row} \\ \hline
\end{tabular}
\caption{
\label{fig:n}
Comparing performance of \fd and \sfd on synthetic data. Each column reports the measurement against one parameter; ordered from left to right it is number of datapoints $(n)$, dimension $(d)$, sketch size $(\ell)$, and number of non-zeros $(\nnz)$ per row. 
Table \ref{tbl:param} lists default value of all parameters. } 
\end{table*}

%\begin{figure*}[t!]
%\begin{centering}
%\rotatebox{90}{\hspace{6mm} \small \textsf{Projection Error }}
%\includegraphics[width=\figsize]{plots/n/proj.pdf}
%\includegraphics[width=\figsize]{plots/d/proj.pdf}
%\includegraphics[width=\figsize]{plots/ell/proj.pdf}
%\includegraphics[width=\figsize]{plots/sparsity/proj.pdf}
%%
%\rotatebox{90}{\hspace{6mm} \small \textsf{Covariance Error } }
%\includegraphics[width=\figsize]{plots/n/cov.pdf}
%\includegraphics[width=\figsize]{plots/d/cov.pdf}
%\includegraphics[width=\figsize]{plots/ell/cov.pdf}
%\includegraphics[width=\figsize]{plots/sparsity/cov.pdf}
%%
%\rotatebox{90}{\hspace{10mm} \small \textsf{Run Time } }
%\hspace{2mm}
%\includegraphics[width=\figsize]{plots/n/time.pdf}
%\includegraphics[width=\figsize]{plots/d/time.pdf}
%\includegraphics[width=\figsize]{plots/ell/time.pdf}
%\includegraphics[width=0.93\figsize]{plots/sparsity/time.pdf}
%%
%\vspace{1mm}
%\end{centering}
%\vspace{1mm}
%\hspace{10mm}\small \textsf{number of datapoints} ($n$) \hspace{17mm} \textsf{dimension} ($d$) \hspace{23mm} \textsf{sketch size} ($\ell$) \hspace{23mm} \textsf{nnz per row}
%\caption{
%\label{fig:n}
%Comparing performance of \fd and \sfd on synthetic data. Each column reports the measurement against one parameter; ordered from left to right it is number of datapoints $(n)$, dimension $(d)$, sketch size $(\ell)$, and number of non-zeros $(\nnz)$ per row. 
%Table \ref{tbl:param} lists default value of all parameters. } 
%
%\end{figure*}

\begin{figure*}[t!]
\begin{centering}
\includegraphics[width=1.2\figsize]{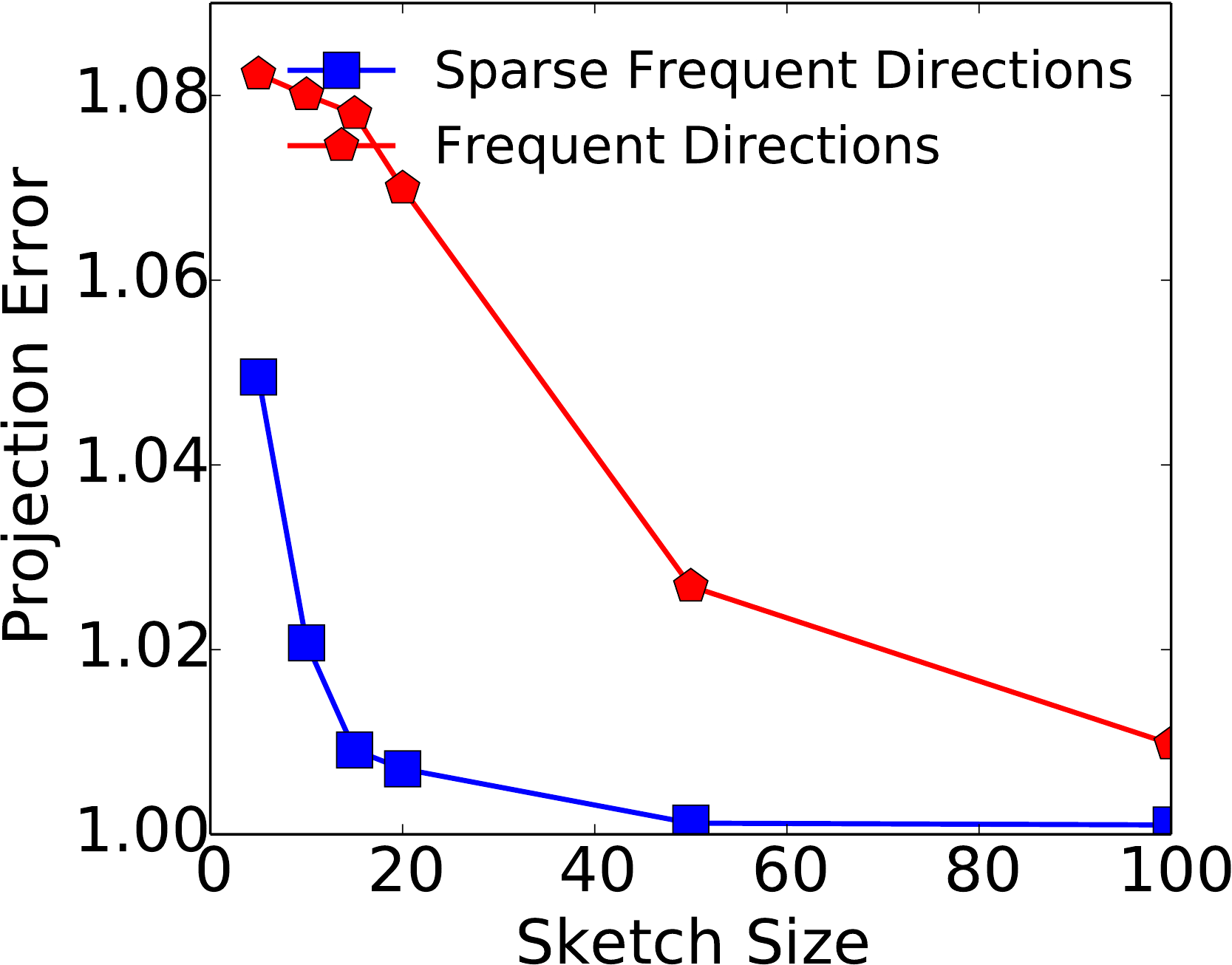} \;\;\;\;\;\;
\includegraphics[width=1.2\figsize]{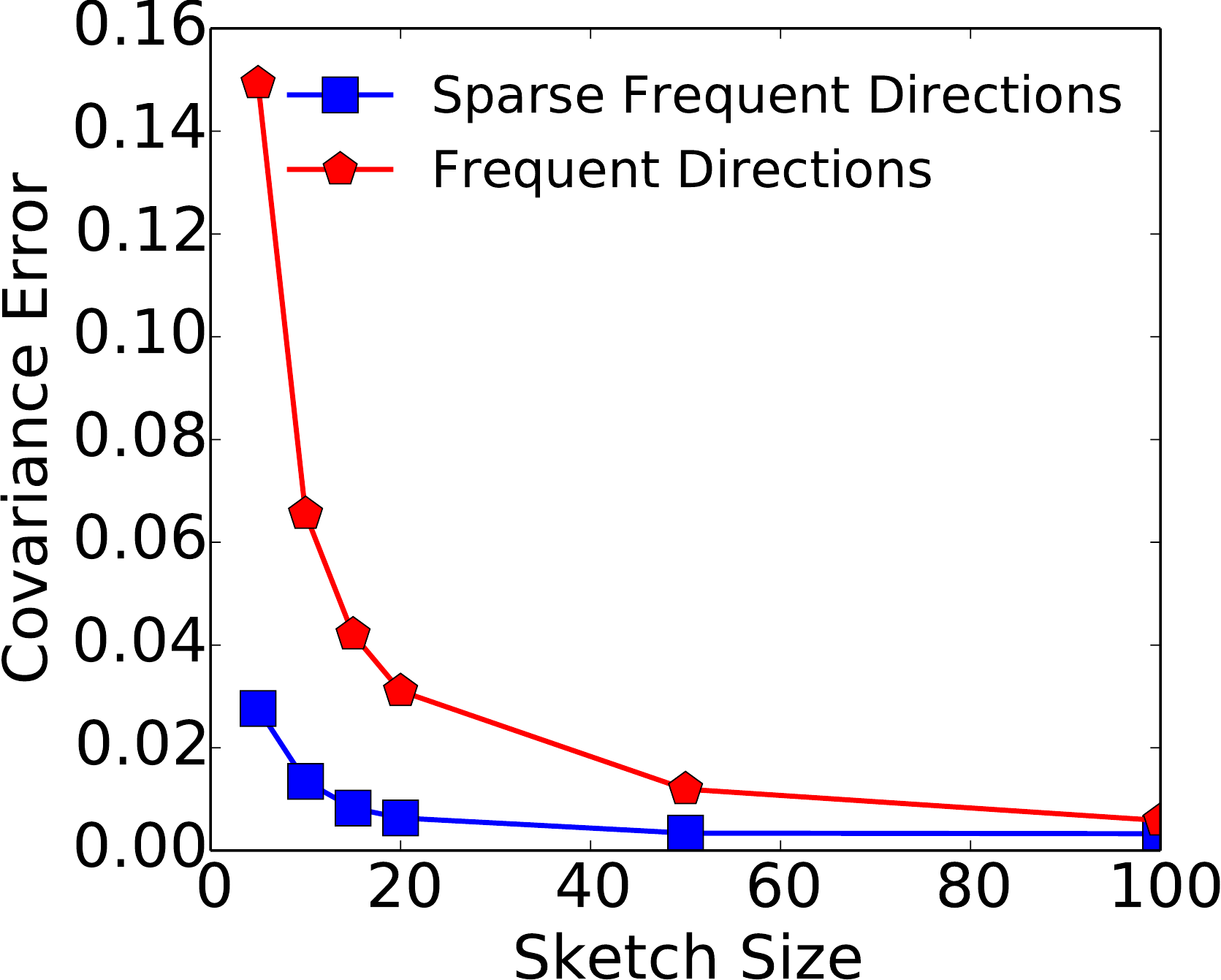} \;\;\;\;\;\;
\includegraphics[width=1.2\figsize]{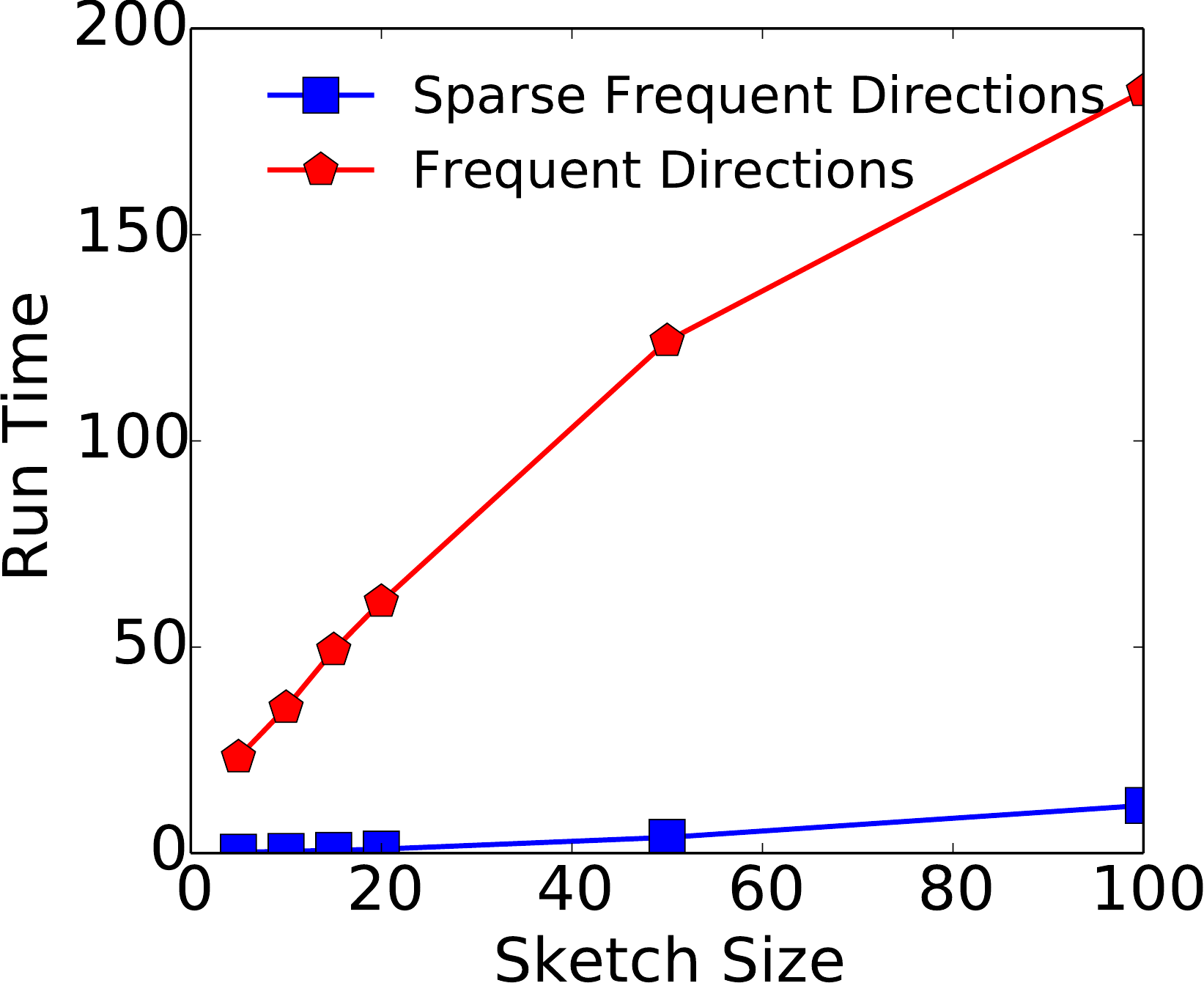}
\caption{
\label{fig:real}
Comparing performance of \fd and \sfd on $20$ Newsgroups dataset.  We plot Projection Error, Covariance Error, and Run Time as a function of sketch size ($\ell$).} 
\end{centering}
\end{figure*}

%%%%%%%%%%%%%%%%%%%%%%%%%%%%%%%%%%%%%%%%%%%%%%%%%%%%%%%%
\subsection{Observations}

By considering Table \ref{fig:n} on synthetic data and Figure \ref{fig:real} on the real data, we can vary and learn many aspects of the runtime and accuracy of \sfd and \fd.  

\Paragraph{Runtime}
%We will first discuss the runtimes for \fd and \sfd. 
Consider the last row of Table \ref{fig:n}, the ``Run Time'' row, and the last column of Figure \ref{fig:real}.  
\sfd is clearly faster than \fd for all datasets, except when the synthetic data becomes dense in the last column of the ``Run Time'' row, where $d=1000$ and $\nnz \text{per row} = 500$ in the right-most data point.  For the default values the improvement is between about a factor of $1.5$x and $2x$, but when the matrix is very sparse the improvement is $10$x or more.  Very sparse synthetic examples are seen in the left data points of the last column, and in the right data points of the second column, of the ``Run Time'' row. 

In particular, these two plots (the second and fourth columns of the ``Run Time'' row) really demonstrate the dependence of \sfd on $\nnz(A)$ and of \fd on $n\cdot d$.  
In the last column, we fix the matrix size $n$ and $d$, but increase the number of non-zeros $\nnz(A)$; the runtime of \fd is basically constant, while for \sfd it grows linearly.  
In the second column, we fix $n$ and $\nnz(A)$, but increase the number of columns $d$; the runtime of \fd grows linearly while the runtime for \sfd is basically constant.  

These algorithms are designed for datasets with extremely large values of $n$; yet we only run on datasets with $n$ up to $60{,}000$ in Table \ref{fig:n}, and $117{,}759$ in Figure \ref{fig:real}.  However, both \fd and \sfd have runtime that grows linearly with respect to the number of rows (assuming the sparsity is at an expected fixed rate per row for \sfd).  This can also be seen empirically in the first column of the ``Run Time'' row where, after a small start-up cost, both \fd and \sfd grow linearly as a function of the number of data points $n$.  Hence, it is valid to directly extrapolate these results for datasets of increased $n$.  

\Paragraph{Accuracy}
We will next discuss the accuracy, as measured in Projection Error in the top row of Table \ref{fig:n} and left plot of Figure \ref{fig:real}, and in Covariance Error in the middle row of Table \ref{fig:n} and middle plot of Figure \ref{fig:real}.  
We observe that both \fd and \sfd obtain very small error (much smaller than upper bounded by the theory), as has been observed elsewhere~\cite{ghashami2015frequent,desai2015improved}.  Moreover, the error for \sfd always nearly matches, or improves over \fd.  We can likely attribute this improvement to being able to process more rows in each batch, and hence needing to perform the shrinking operation fewer overall times.  
The one small exception to \sfd having less Covariance Error than \fd is for extreme sparse datasets in the leftmost data points of Table \ref{fig:n}, last column -- we attribute this to some peculiar orthogonality of columns with near equal norms due to extreme sparsity.

\bibliographystyle{plain}
\bibliography{sparsefd}

\end{document}